\newcommand{\ST}{\ensuremath{\mathsf{ST}}}
\newcommand{\SLT}{\ensuremath{\mathsf{SLT}}}
\newcommand{\SL}{\mathtt{suffixLink}}
\newcommand{\BWT}{\ensuremath{\mathsf{BWT}}}
\newcommand\SP[1]{\mathtt{sp}(#1)}
\newcommand\EP[1]{\mathtt{ep}(#1)} 
\newcommand\INTERVAL[1]{\mathtt{range}(#1)}
\newcommand{\repr}{\ensuremath{\mathtt{repr}}}
\newenvironment{proofSketch}{\noindent {\it Proof sketch.}}{$\Box$\vskip1ex}
\newtheorem{lemma}{Lemma} 
\newtheorem{theorem}{Theorem}
\begin{document} 

\title{Space-efficient detection of unusual words}
\author[1,2]{Djamal Belazzougui}
\author[3]{Fabio Cunial}
\affil[1]{Department of Computer Science, University of Helsinki, Finland.\thanks{This work was partially supported by Academy of Finland under grant 284598 (Center of Excellence in Cancer Genetics Research).}}
\affil[2]{Helsinki Institute for Information Technology, Finland.}
\affil[3]{Max Planck Institute of Molecular Cell Biology and Genetics, Dresden, Germany.}
\maketitle

\begin{abstract}
Detecting all the strings that occur in a text more frequently or less frequently than expected according to an IID or a Markov model is a basic problem in string mining, yet current algorithms are based on data structures that are either space-inefficient or incur large slowdowns, and current implementations cannot scale to genomes or metagenomes in practice. In this paper we engineer an algorithm based on the suffix tree of a string to use just a small data structure built on the Burrows-Wheeler transform, and a stack of $O(\sigma^2\log^2 n)$ bits, where $n$ is the length of the string and $\sigma$ is the size of the alphabet. The size of the stack is $o(n)$ except for very large values of $\sigma$. We further improve the algorithm by removing its time dependency on $\sigma$, by reporting only a subset of the maximal repeats and of the minimal rare words of the string, and by detecting and scoring candidate under-represented strings that \emph{do not occur} in the string. Our algorithms are practical and work directly on the BWT, thus they can be immediately applied to a number of existing datasets that are available in this form, returning this string mining problem to a manageable scale.
\end{abstract}

\section{Introduction} \label{sec:intro}

Detecting all the patterns of a string whose number of occurrences matches some notion of statistical surprise is a fundamental requirement of the post-genome era, in which textual datasets grow faster than the ability to understand them, and in which over- and under-representation with respect to a statistical model is often an indicator of structure or function. The sheer volume of the available datasets makes even simple models of patterns and simple measures of statistical surprise useful in practice, if their detection scales to extremely long strings in reasonable time and space. In this paper we focus on the simplest possible model of a pattern -- a string $W$, of any length, that occurs without mismatches $f_{T}(W)$ times in a text $T$ of length $n$ -- and we consider measures of statistical surprise that score $W$ according to the expected number $\mathbb{E}[f_{T}(W)]$ and to the variance $\mathbb{V}[f_{T}(W)]$ of the number of its occurrences in a random text of length $|T|$ (see e.g. \cite{apostolico2003monotony} and references therein). We assume that the random source is a given Markov chain, and for concreteness we set its order to zero, since this simple case already captures the computational structure of the problem \cite{apostolico2000efficient}. Moreover, we focus on computing $\mathbb{V}[f_{T}(W)]$, since computing expectations with respect to a Markov chain of order zero is trivial (see e.g. \cite{apostolico1997annotated}).

$\mathbb{V}[f_{T}(W)]$ enjoys the remarkable property that its computation can be carried out by iterating over all the proper \emph{borders} of $W$, i.e. over all the nonempty substrings of $W$ shorter than $W$ that are at the same time prefix and suffix of $W$: see \cite{apostolico1997annotated} for a detailed derivation. To make the paper self-contained, here we just recall that $\mathbb{V}[f_{T}(W)]$ can be computed in constant time from the functions $\phi(W)$ and $\gamma(W)$ defined below:
\begin{eqnarray*}
\phi(W) & = & \sum_{b \in \mathcal{B}(W)}(n-2|W|+b+1) \cdot \pi(W[b..|W|-1]) \\
\gamma(W) & = & \sum_{b \in \mathcal{B}(W)} \pi(W[b..|W|-1]) \\
\end{eqnarray*}
where strings are indexed from zero, $\pi(W)=\prod_{i=0}^{|W|-1}\mathbb{P}[W_i]$, $W_i$ is the $i$th character of string $W$, $\mathbb{P}[c]$ is the probability of character $c$ according to the given zero-order Markov chain, and $\mathcal{B}(W)$ is the set of all border lengths of $W$. Removing the components of $\mathbb{V}[f_{T}(W)]$ that depend on borders can cause large relative errors in practice \cite{apostolico2000efficient}, so we focus on the exact computation of $\mathbb{V}[f_{T}(W)]$. It is well known that borders have a recursive structure, in the sense that the set of borders of $W$ consists of the longest border $V$ of $W$, and of all the borders of $V$. This observation enables one to map the computation of $\mathbb{V}[f_{T}(W)]$ for a given $W$ onto the Morris-Pratt algorithm \cite{morris1970linear}, thus achieving time $O(|W|)$ in the worst case \cite{apostolico1997annotated}. Specifically:
\begin{eqnarray*}
\phi(W) & = & \delta(W) \cdot \big( \phi(B) -2(|W|-|B|)\gamma(B) +n-2|W|+|B|+1 \big)\\
\gamma(W) & = & \delta(W) \cdot \big( 1+\gamma(B) \big)
\end{eqnarray*}
%\begin{eqnarray*}
%\phi(W) & = & \delta(W) \cdot \Big(n- 2|W|+|bord(W)|+1 \Big) +\phi\big(bord(W)\big)\\
%\gamma(W) & = & \delta(W) +\gamma\big(bord(W)\big)
%\end{eqnarray*}
where $B$ is the longest border of $W$, and $\delta(W) = \pi(W[|B|..|W|-1])$. In practice, however, we are interested in extracting from a string $T$ \emph{all its substrings $W$}, \emph{of any length}, such that a user-specified measure of surprise computed on $\mathbb{E}[f_{T}(W)]$ and $\mathbb{V}[f_{T}(W)]$ is, say, greater than a threshold. Even though computing $\mathbb{V}[f_{T}(W)]$ takes $O(|W|)$ time for a given $W$, enumerating and scoring in this way all substrings of a text $T$ of length $n$ takes $O(n^2)$ time.

Luckily, a number of statistical scores $z(W)$ enjoy the additional property that $z(XWY) \geq z(W)$ if $f_{T}(XWY) = f_{T}(W)$, where $X$ and $Y$ are strings \cite{apostolico2003monotony}. Consider then a set $\mathcal{A}$ of substrings of $T$ such that all substrings in the set have the same number of occurrences, and consider the partial order $\preceq$ on $\mathcal{A}$ such that $V \preceq W$ iff $W=XVY$ for (possibly empty) strings $X$ and $Y$. If we display to the user just the maximal elements of $\mathcal{A}$ with respect to $\preceq$, we guarantee that every over-represented string $W \in \mathcal{A}$ that we do not output is a substring of a string $XWY \in \mathcal{A}$ in the output which has at least the same score. Symmetrically, if we display just the minimal elements of $\mathcal{A}$ with respect to $\preceq$, we guarantee that every under-represented string $XWY \in \mathcal{A}$ that we do not output is a superstring of a string $W \in \mathcal{A}$ in the output which has at most the same score. A possible choice for $\mathcal{A}$ is the set of all substrings that start at exactly the same positions in $T$: this class has a unique maximal element, which corresponds to a node of the suffix tree of $T$, and a unique minimal element, which corresponds to the right extension by a single character of a node of the suffix tree of $T$. Since the number of all such classes is $O(|T|)$, and since all such classes are connected to one another by a trie, known as the \emph{suffix-link tree}, it is possible to devise an algorithm that computes $\mathbb{V}[f_{T}(W)]$ for all the maximal and minimal elements of all such classes, in a total amount of time that grows linearly in $|T|$ \cite{apostolico2000efficient}.

In this paper we engineer the algorithm described in \cite{apostolico2000efficient} to use as its substrate the Burrows-Wheeler transform of $T$, rather than space-inefficient data structures like the (truncated) suffix tree of $T$, or space-efficient simulations of the suffix tree with $O(\log^{\varepsilon}{n})$ slowdown, like compressed suffix trees (see e.g. \cite{gog2011compressed} and references therein). We also observe that the time complexity of the algorithm described in \cite{apostolico2000efficient} depends on the cardinality of the alphabet, and we remove this dependency. Moreover, we adapt the algorithm to work on the smallest possible set of equivalence classes, thus reducing time and output size in practice. Assuming an alphabet of size $\sigma \in o(\sqrt{n}/\log{n})$, we can thus perform all the computations described in \cite{apostolico2000efficient} in $O(n)$ time and in $o(n+\lambda\sqrt{n})$ bits of space, given the BWT of $T$ and few additional data structures \cite{BNV13}, where $\lambda$ is the length of a longest repeat of $T$. For statistical reasons, the maximum length of a string to be reported is often $O(\log_{\sigma}{n})$, thus our algorithm uses effectively $o(n)$ bits of space in addition to the input. Concatenating this setup to the BWT construction algorithm described in \cite{Belazzougui14}, we can discover all the over- and under-represented substrings of $T$, \emph{directly from $T$ itself}, in randomized $O(n)$ time and in $O(n\log{\sigma})$ bits of space in addition to $T$ itself. Finally, we extend the algorithm in \cite{apostolico2000efficient} to consider potentially under-represented strings that do not occur in $T$, thus providing for the first time a way to score and rank the minimal absent words of $T$.

%
%Given a data structure that supports $\mathtt{rangeDistinct}$ queries on the Burrows-Wheeler transform of each string in input, we show that a number of popular string kernels, like the $k$-mer kernel, the substrings kernels, a number of length-weighted kernels, the minimal absent words kernel, and kernels with Markovian corrections, can all be computed in $O(nd)$ time and in $o(n)$ bits of space in addition to the input, \emph{all in a single pass over the BWTs of the input strings}, where $d$ is the time taken by the $\mathtt{rangeDistinct}$ query per element in its output. The same bounds hold for computing a number of measures of compositional complexity \emph{for multiple values of $k$ at the same time}, like the $k$-mer profile and the $k$-th order empirical entropy, and for choosing the value of $k$ used in $k$-mer kernels from the data. All these algorithms become $O(n)$ using the $\mathtt{rangeDistinct}$ data structure described in \cite{BNV13}, and concatenating this setup to the BWT construction algorithm described in \cite{Belazzougui14}, we can compute all such kernels and complexity measures \emph{from the input strings} in randomized $O(n)$ time and in $O(n\log{\sigma})$ bits of space in addition to the input. Finally, we show that measures of expectation based on Markov models are related to the left and right extensions of maximal repeats.

\section{Preliminaries}

\subsection{Strings} \label{sec:strings}

Let $\Sigma=[1..\sigma]$ be an integer alphabet, let $\#=0$ be a separator not in $\Sigma$, let $T=[1..\sigma]^{n-1}\#$ be a string, and let $\varepsilon$ be the empty string. For reasons that will become clear in Section \ref{sec:enumerator}, we assume $\sigma \in o(\sqrt{n}/\log{n})$ throughout the paper. We denote by $f_{T}(W)$ the number of (possibly overlapping) occurrences of a string $W$ in the circular version of $T$. A \emph{repeat} $W$ is a string that satisfies $f_{T}(W)>1$. We denote by $\Sigma^{\ell}_{T}(W)$ the set of characters $\{a \in [0..\sigma] : f_{T}(aW)>0\}$ and by $\Sigma^{r}_{T}(W)$ the set of characters $\{b \in [0..\sigma] : f_{T}(Wb)>0\}$. A repeat $W$ is \emph{right-maximal} (respectively, \emph{left-maximal}) iff $|\Sigma^{r}_{T}(W)|>1$ (respectively, iff $|\Sigma^{\ell}_{T}(W)|>1$). It is well known that $T$ can have at most $n-1$ right-maximal substrings and at most $n-1$ left-maximal substrings. A \emph{maximal repeat} of $T$ is a repeat that is both left- and right-maximal. Clearly a maximal repeat $W$ of $T$ satisfies $f_{T}(aW)<f_{T}(W)$ and $f_{T}(Wb)<f_{T}(W)$ for any characters $a$ and $b$ in $\Sigma$. A repeat is \emph{supermaximal} if it is not a proper substring of any other repeat. A \emph{minimal rare word} of $T$ is a string $W$ that satisfies $f_{T}(W)<f_{T}(V)$ for every proper substring $V$ of $W$. Clearly $W$ must have the form $aXb$, where $a$ and $b$ are characters and $X$ is a maximal repeat of $T$. If $f_{T}(W)>0$, then $aX$ is a right-maximal substring of $T$ and $Xb$ is a left-maximal substring of $T$. If $f_{T}(W)=0$, then $aX$ (respectively, $Xb$) must occur in $T$, but it is not necessarily right-maximal (respectively, left-maximal). A minimal rare word of $T$ that does not occur in $T$ is called \emph{minimal absent word} (see e.g. \cite{chairungsee2012using,herold2008efficient} and references therein): the total number of such strings can be $\Theta(\sigma n)$ \cite{crochemore1998automata}. A minimal rare word of $T$ that occurs exactly once in $T$ is called \emph{minimal unique substring} (see e.g. \cite{ileri2015shortest} and references therein). It is clear that the total number of minimal rare words of $T$ that occur at least once in $T$ is $O(n)$.

String $V \neq \varepsilon$ is a \emph{proper border} of string $W$ if $W=VX$ and $W=YV$ for nonempty strings $X$ and $Y$. A string $W$ can have zero, one, or multiple proper borders: we denote by $bord(W)$ the length of the \emph{longest} border of $W$. Each border of $W$ is followed by a character when it occurs as a prefix, and it is preceded by a character when it occurs as a suffix: we use $a|W$ to denote the length of the longest border of $W$ that is preceded by character $a$ when it occurs as a suffix, and we use $W|a$ to denote the length of the longest border of $W$ that is followed by character $a$ when it occurs as a prefix. Clearly both $a|W$ and $W|a$ can be zero. We denote by $\mathcal{B}(W)$ the set of lengths of all borders of $W$, by $\mathcal{B}^{r}(W)$ the set of pairs $\{ (a,a|W) : a \in \sigma, a|W \neq 0 \}$, and by $\mathcal{B}^{\ell}(W)$ the set of pairs $\{ (a,W|a) : a \in \sigma, W|a \neq 0 \}$. It is well known that $\mathcal{B}(W) = \{ bord(W) \} \cup \mathcal{B}(V)$, where $V$ is the longest border of $W$: see e.g. \cite{Jewels_of_stringology} and references therein. In this paper we will also use $\mathtt{left}_W$ to denote an array of size $|\Sigma_{T}^{r}(W)|$, indexed by the characters in $\Sigma_{T}^{r}(W)$ in lexicographic order, such that $\mathtt{left}_{W}[c]=W|c$. Similarly, we will use $\mathtt{right}_W$ to denote an array of size $|\Sigma_{T}^{\ell}(W)|$, indexed by the characters in $\Sigma_{T}^{\ell}(W)$ in lexicographic order, such that $\mathtt{right}_{W}[c]=c|W$. Set $\mathcal{B}(W)$ determines all possible ways in which $W$ can overlap with itself: specifically, the maximum possible number of occurrences of $W$ in a string of length $n$ is $f^{*}(W,n) = \lceil (n-|W|+1)/period(W) \rceil$, where $period(W)=|W|-bord(W)$. When $f_{T}(W)$ needs to be compared to $f_{T'}(W)$, where $|T'| \neq |T|$, it is customary to divide $f_{T}(W)$ by $f^{*}(W,|T|)$. It is easy to see that the longest border of a random string of length $n$ generated by an IID source is expected to tend to a constant as $n$ tends to infinity.

For reasons of space we assume the reader to be familiar with the notion of \emph{suffix tree} $\ST_T$ of a string $T$, which we do not define here. %We denote by $\ell(e)$, or equivalently by $\ell(u,v)$, the label of edge $e=(u,v) \in E$,
We denote by $\ell(v)$ the string label of a node $v$ in a suffix tree. It is well known that a substring $W$ of $T$ is right-maximal iff $W=\ell(v)$ for some internal node $v$ of $\ST_T$. We assume the reader to be familiar with the notion of \emph{suffix link} connecting a node $v$ with $\ell(v)=aW$ for some $a \in [0..\sigma]$ to a node $w$ with $\ell(w)=W$: we say that $w=\SL(v)$ in this case. Here we just recall that suffix links and internal nodes of $\ST_T$ form a tree, called the \emph{suffix-link tree} of $T$ and denoted by $\SLT_T$, and that inverting the direction of all suffix links yields the so-called \emph{explicit Weiner links}. Given an internal node $v$ and a symbol $a \in [0..\sigma]$, it might happen that string $a\ell(v)$ does occur in $T$, but that it is not right-maximal, i.e. it is not the label of any internal node of $\ST_T$: all such left extensions of internal nodes that end in the middle of an edge are called \emph{implicit Weiner links}. An internal node $v$ of $\ST_T$ can have more than one outgoing Weiner link, and all such Weiner links have distinct labels: in this case, $\ell(v)$ is a maximal repeat. 
%There is another relationship between maximal repeats and implicit Weiner links, which we will repeatedly use in what follows:
%
%\begin{property}\label{property:maxRepeats}
%Let $v$ be an internal node of $\ST_T$. If there is an implicit Weiner link from $v$, then $\ell(v)$ is a maximal repeat of $T$.
%\end{property}
It is known that the number of suffix links (or, equivalently, of explicit Weiner links) is upper-bounded by $2n-2$, and that the number of implicit Weiner links can be upper-bounded by $2n-2$ as well.

If $V$ is a nonempty proper border of $W$, then $\Sigma_{T}^{\ell}(W) \subseteq \Sigma_{T}^{\ell}(V)$ and $\Sigma_{T}^{r}(W) \subseteq \Sigma_{T}^{r}(V)$. Thus, if $W$ is right-maximal (respectively, left-maximal) then $V$ is right-maximal (respectively, left-maximal); if $W$ is a maximal repeat, then $V$ is a maximal repeat; and if $W=aX$ where $a \in \Sigma$ and $X$ is a maximal repeat, then $V=aY$ where $Y$ is a maximal repeat\footnote{Thus, maximal repeats connected by longest border relationships form a tree rooted at the empty string: the path from the root to a maximal repeat lists all its borders, and the internal nodes of this tree cannot be supermaximal repeats. Similarly, longest border relationships and strings $aW$ (respectively, $Wa$) where $W$ is a maximal repeat, form a tree rooted at the empty string.}.

\subsection{Enumerating maximal repeats and minimal rare words}\label{sec:enumerator}

For reasons of space we assume the reader to be familiar with the notion and uses of the Burrows-Wheeler transform of $T$, including the $C$ array, the $\mathtt{rank}$ function, and backward searching. In this paper we use $\BWT_T$ to denote the BWT of $T$, we use $\INTERVAL{W} = [\SP{W}..\EP{W}]$ to denote the lexicographic interval of a string $W$ in a BWT that is implicit from the context, and we use $\Sigma_{i,j}$ to denote the set of distinct characters that occur inside interval $[i..j]$ of a string that is implicit from the context. We also denote by $\mathtt{rangeDistinct}(i,j)$ the function that returns the set of tuples $\{(c,\mathtt{rank}(c,p_c),\mathtt{rank}(c,q_c)) : c \in \Sigma_{i,j} \}$, \emph{in any order}, where $p_c$ and $q_c$ are the first and the last occurrence of character $c$ inside interval $[i..j]$, respectively. Here we focus on a specific application of $\BWT_T$: enumerating all the right-maximal substrings of $T$, or equivalently all the internal nodes of $\ST_T$. In particular, we use the algorithm described in \cite{Belazzougui14} (Section 4.1), which we sketch here for completeness.

Given a substring $W$ of $T$, let $b_1 < b_2 < \dots < b_k$ be the sorted sequence of all the distinct characters in $\Sigma^{r}_{T}(W)$, and let $a_1,a_2,\dots,a_h$ be the sequence of all the characters in $\Sigma^{\ell}_{T}(W)$, not necessarily sorted. Assume that we represent a substring $W$ of $T$ as a pair $\repr(W)=(\mathtt{chars}[1..k],\mathtt{first}[1..k+1])$, where $\mathtt{chars}[i]=b_i$, $\INTERVAL{Wb_i}=[\mathtt{first}[i]..\mathtt{first}[i+1]-1]$ for $i \in [1..k]$, and function $\INTERVAL$ refers to $\BWT_T$. Note that $\INTERVAL{W}=[\mathtt{first}[1]..\mathtt{first}[k+1]-1]$, since it coincides with the concatenation of the intervals of the right extensions of $W$ in lexicographic order. If $W$ is not right-maximal, array $\mathtt{chars}$ in $\repr(W)$ has length one. Given a data structure that supports $\mathtt{rangeDistinct}$ queries on $\BWT_T$, and given the $C$ array of $T$, there is an algorithm that converts $\repr(W)$ into the sequence $a_1,\dots,a_h$ and into the corresponding sequence $\repr(a_{1}W),\dots,\repr(a_{h}W)$, in $O(de)$ time and $O(\sigma^{2}\log{n})$ bits of space in addition to the input and the output \cite{Belazzougui14}, where $d$ is the time taken by the $\mathtt{rangeDistinct}$ operation per element in its output, and $e$ is the number of distinct strings $a_{i}Wb_{j}$ that occur in the circular version of $T$, where $i \in [1..h]$ and $j \in [1..k]$. We encapsulate this algorithm into a function that we call $\mathtt{extendLeft}$.

If $a_i W$ is right-maximal, i.e. if array $\mathtt{chars}$ in $\repr(a_{i}W)$ has length greater than one, we push pair $(\repr(a_i W),|W|+1)$ onto a stack $S$. In the next iteration we pop the representation of a string from the stack and we repeat the process, until the stack becomes empty. This process is equivalent to following all the explicit Weiner links from the node $v$ of $\ST_T$ with $\ell(v)=W$, not necessarily in lexicographic order. Thus, running the algorithm from a stack initialized with $\repr(\varepsilon)$ is equivalent to performing a depth-first preorder traversal of the suffix-link tree of $T$ 
(but with an arbitrary exploration order on the children of each node), which guarantees to enumerate all the right-maximal substrings of $T$. %Triplet $\repr(\varepsilon)$ can be easily built from the $C$ array of $T$. 
Every operation performed by the algorithm can be charged to a distinct node or Weiner link of $\ST_T$, thus the algorithm runs in $O(nd)$ time. We keep the depth of the stack to $O(\log{n})$ rather than to $O(n)$ by using the folklore trick of pushing at every iteration the pair $(\repr(a_i W),|a_i W|)$ with largest $\INTERVAL{a_i W}$ first (see e.g. \cite{hoare1962quicksort}). Every suffix-link tree level in the stack contains at most $\sigma$ pairs, and each pair takes at most $\sigma\log{n}$ bits of space, thus the total space used by the stack is $O(\sigma^2 \log^{2}{n})$ bits. The following theorem follows from our assumption that $\sigma \in o(\sqrt{n}/\log{n})$:

\begin{sloppypar}
\begin{theorem}[\cite{Belazzougui14}]\label{thm:enumerator}
Let $T \in [1..\sigma]^{n-1}\#$ be a string. Given a data structure that supports $\mathtt{rangeDistinct}$ queries on $\BWT_T$, we can enumerate all the right-maximal substrings $W$ of $T$, and for each of them we can return $|W|$, $\repr(W)$, the sequence $a_1,a_2,\dots,a_h$ of all characters in $\Sigma_{T}^{\ell}(W)$ (not necessarily sorted), and the sequence $\repr(a_{1}W),\dots,\repr(a_{h}W)$, in $O(nd)$ time and in $O(\sigma^2\log^2 n)=o(n)$ bits of space in addition to the input and the output, where $d$ is the time taken by the $\mathtt{rangeDistinct}$ operation per element in its output.
\end{theorem}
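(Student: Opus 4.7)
The plan is to verify three ingredients---correctness of the enumeration, the $O(nd)$ time bound, and the $O(\sigma^{2}\log^{2}n)$ stack-space bound---by treating $\mathtt{extendLeft}$ as a black box with the specification recalled in Section~\ref{sec:enumerator}, and then combining an amortization over Weiner links with a folklore halving argument for the stack.

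For correctness, I would observe that a single invocation of $\mathtt{extendLeft}$ on $\repr(W)$ produces exactly the characters $a_1,\dots,a_h\in\Sigma^{\ell}_{T}(W)$ that label the explicit Weiner links out of the node of $\ST_T$ whose label is $W$, together with the representations $\repr(a_iW)$. Pushing back only those $a_iW$ whose $\mathtt{chars}$ array has length greater than one retains exactly the internal nodes of $\ST_T$, which by definition coincide with the right-maximal substrings of $T$. Starting from $\repr(\varepsilon)$ therefore simulates a depth-first preorder traversal of $\SLT_T$ with an arbitrary order on the children of each node, so every right-maximal substring is visited exactly once and the whole tuple $(|W|,\repr(W),a_1,\dots,a_h,\repr(a_1W),\dots,\repr(a_hW))$ is available at the moment of the visit.

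For time, I would amortize the $O(d\cdot e(W))$ cost of $\mathtt{extendLeft}$ at $W$ over the number $e(W)$ of distinct strings $a_iWb_j$ occurring in the circular version of $T$. Summing $e(W)$ over all right-maximal $W$ equals the total number of explicit plus implicit Weiner links of $\ST_T$, since each such occurrence corresponds to a unique Weiner link out of $W$; both counts are bounded by $2n-2$, so the total running time is $O(nd)$.

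For space, I would exploit the fact that the $\BWT_T$ intervals of the distinct left extensions of any string $W$ partition $\INTERVAL{W}$, because each occurrence of $W$ has a unique preceding character. Whenever a node $v_i$ on the active root-to-leaf path in $\SLT_T$ has $k_i\ge 2$ right-maximal left extensions, the child popped next---smallest by the ``push largest first'' rule---has an interval of size at most $|\INTERVAL{v_i}|/k_i\le |\INTERVAL{v_i}|/2$, while nodes with $k_i=1$ leave the stack unchanged (their single child is pushed and immediately popped). The stack therefore contains at most $\log_2 n$ branching levels, each leaving at most $\sigma-1$ unprocessed siblings, for a total of $O(\sigma\log n)$ pairs; since each pair stores $O(\sigma)$ values from $[0..n]$ we obtain $O(\sigma^{2}\log^{2}n)$ bits, which is $o(n)$ under $\sigma\in o(\sqrt{n}/\log n)$. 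The main obstacle will be this last argument: one must carefully separate branching levels from non-branching chains, and check that the partition of $\INTERVAL{W}$ by \emph{left} extensions---rather than by right extensions, which is the more familiar property of BWT intervals---really does yield the claimed halving of the smallest child's interval.
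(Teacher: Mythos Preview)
Your proposal is correct and follows the same outline the paper gives in the paragraphs preceding the theorem: use $\mathtt{extendLeft}$ to traverse $\SLT_T$, charge the work to nodes and (explicit plus implicit) Weiner links for the $O(nd)$ bound, and apply the push-largest-first trick to keep $O(\log n)$ branching levels with at most $\sigma$ pairs each. On your self-flagged obstacle: the intervals $\INTERVAL{a_iW}$ do \emph{not} literally partition $\INTERVAL{W}$ as sub-intervals, but you only need that their \emph{sizes} sum to $|\INTERVAL{W}|$, which is exactly what ``each occurrence of $W$ has a unique preceding character'' gives you---so the halving argument goes through without difficulty.
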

\end{sloppypar}

Theorem \ref{thm:enumerator} does not specify the order in which the right-maximal substrings must be enumerated, nor the order in which the left extensions of a right-maximal substring must be returned. The algorithm we just described can be adapted to return all the maximal repeats of $T$, within the same bounds, by outputting a right-maximal string $W$ iff $|\mathtt{rangeDistinct}(\SP{W},\EP{W})|>1$. Computing the minimal rare words that occur in $T$ is also easy:

\begin{lemma}\label{lemma:minimalRareWords}
Let $T \in [1..\sigma]^{n-1}\#$ be a string. Given a data structure that supports $\mathtt{rangeDistinct}$ queries on $\BWT_T$, we can enumerate all the minimal rare words $W$ of $T$ that occur at least once in $T$, and for each of them we can return $|W|$ and $\INTERVAL{W}$, in $O(nd)$ time and in $O(\sigma^2\log^2 n)=o(n)$ bits of space in addition to the input and the output, where $d$ is the time taken by the $\mathtt{rangeDistinct}$ operation per element in its output.
\end{lemma}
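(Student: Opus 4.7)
The plan is to run the enumerator of Theorem~\ref{thm:enumerator} and, for each maximal repeat $W$ it visits, identify exactly those pairs $(a,b) \in \Sigma_T^{\ell}(W) \times \Sigma_T^{r}(W)$ such that $aWb$ is a minimal rare word of $T$ occurring in $T$. Recall from Section~\ref{sec:strings} that any minimal rare word that occurs in $T$ has the form $aWb$ with $W$ a maximal repeat, and that its defining condition $f_T(aWb) < f_T(V)$ for every proper substring $V$ is, by monotonicity of substring frequencies, equivalent to the two inequalities $f_T(aWb) < f_T(aW)$ and $f_T(aWb) < f_T(Wb)$. These in turn are equivalent to $aW$ being right-maximal and $Wb$ being left-maximal.

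For each right-maximal substring $W$ returned by the enumerator I first test whether $h = |\Sigma_T^{\ell}(W)| > 1$, so that $W$ is a maximal repeat (otherwise $W$ is skipped at constant cost). From $\repr(W)$ and from each $\repr(a_i W)$ I already know, for every pair $(a_i,b)$ such that $a_i W b$ occurs in $T$, both the interval $\INTERVAL{a_i W b}$ and whether $a_i W$ is right-maximal (the latter is just the test that the $\mathtt{chars}$ array of $\repr(a_i W)$ has length greater than one). The only missing ingredient is whether $Wb$ is left-maximal for each right extension $b$ of $W$.

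The key observation is that $\Sigma_T^{\ell}(Wb) = \{a \in \Sigma_T^{\ell}(W) : b \in \Sigma_T^{r}(aW)\}$, so $|\Sigma_T^{\ell}(Wb)|$ can be computed purely from information the enumerator already produces. I maintain an array $\mathtt{count}$ indexed by $\Sigma_T^{r}(W)$ (initialised to zero, and reused across maximal repeats), and for each $a_i$ I increment $\mathtt{count}[b]$ for every $b \in \Sigma_T^{r}(a_i W)$. A second pass over the same set of pairs then emits $a_i W b$, with length $|W|+2$ and interval $\INTERVAL{a_i W b}$, whenever $\mathtt{count}[b] > 1$ and the $\mathtt{chars}$ array of $\repr(a_i W)$ has length greater than one; finally $\mathtt{count}$ is reset along the same pairs before moving on.

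Correctness is immediate from the equivalence in the first paragraph. For the complexity, the extra work charged to a maximal repeat $W$ is linear in the number of pairs $(a,b)$ with $a W b$ occurring in $T$, which is precisely the parameter $e$ from the enumerator; summed over all right-maximal substrings this contributes $O(n)$ pair-operations, hence $O(nd)$ additional time. The auxiliary $\mathtt{count}$ array occupies $O(\sigma \log n)$ bits and is absorbed by the enumerator's $O(\sigma^2 \log^2 n)$. The one technical point to get right is that both the right-maximality of $aW$ and the left-maximality of $Wb$ can be read off the $\repr$ structures the enumerator already delivers, since invoking an extra $\mathtt{rangeDistinct}$ query per candidate would risk inflating the time bound beyond $O(nd)$.
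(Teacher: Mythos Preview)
Your proof is correct and follows essentially the same approach as the paper: enumerate maximal repeats via Theorem~\ref{thm:enumerator}, and for each one use an $O(\sigma\log n)$-bit auxiliary array to decide, for every occurring extension $a_iWb$, whether both $a_iW$ is right-maximal and $Wb$ is left-maximal. The only cosmetic difference is that the paper stores $f_T(Wb)$ in the auxiliary array and tests $f_T(a_iWb)<f_T(Wb)$ directly, whereas you store $|\Sigma_T^{\ell}(Wb)|$ and test $|\Sigma_T^{\ell}(Wb)|>1$; these tests are equivalent given $f_T(a_iWb)>0$.
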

\begin{proof}
We use a technique similar to the one described in \cite{belazzougui2015framework}. Specifically, we adapt Theorem \ref{thm:enumerator} to iterate over all maximal repeats of $T$, and we allocate a temporary array $\mathtt{freq}[0..\sigma]$, indexed by all characters in the alphabet. After having enumerated a maximal repeat $W$, we scan $\mathtt{repr}(W)$, we compute the number of occurrences of every right extension $Wb$ of $W$ using array $\mathtt{first}$, and we write $f_{T}(Wb)$ in $\mathtt{freq}[b]$. Then, for every $i \in [1..h]$, we check whether $\mathtt{repr}(a_{i}W)$ contains more than one character: if this is the case, then $f_{T}(a_{i}W)>f_{T}(a_{i}Wb)$ for every $b \in [0..\sigma]$. Thus, we scan $\mathtt{repr}(a_{i}W)$ and for every $b$ in its array $\mathtt{chars}$ we check whether $f_{T}(a_{i}Wb)<f_{T}(Wb)$, by accessing $\mathtt{freq}[b]$. If this is the case, then $a_{i}Wb$ is a minimal rare word, and its interval in $\BWT_T$ can be derived in constant time from the array $\mathtt{first}$ of $\mathtt{repr}(a_{i}W)$. At the end of this process, we reset array $\mathtt{freq}$ to its initial state by scanning $\mathtt{repr}(W)$ again.
{$\Box$\vskip1ex}\end{proof}

Minimal rare words that do not occur in $T$ can be enumerated using a slight variation of Lemma \ref{lemma:minimalRareWords}, as described in \cite{belazzougui2015framework}:

\begin{lemma}[\cite{belazzougui2015framework}]\label{lemma:minimalAbsentWords}
Let $T \in [1..\sigma]^{n-1}\#$ be a string. Given a data structure that supports $\mathtt{rangeDistinct}$ queries on $\BWT_T$, we can enumerate all the minimal rare words $aWb$ of $T$ that do not occur in $T$, where $a$ and $b$ are characters and $W \in [1..\sigma]^*$, and for each of them we can return $a$, $b$, $|W|$, and $\INTERVAL{W}$, in $O(nd+\mathtt{occ})$ time and in $O(\sigma^2\log^2 n)=o(n)$ bits of space in addition to the input and the output, where $d$ is the time taken by the $\mathtt{rangeDistinct}$ operation per element in its output, and $\mathtt{occ}$ is the output size.
\end{lemma}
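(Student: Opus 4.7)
The plan is to adapt the procedure of \Lemma{lemma:minimalRareWords}. Recall from Section~\ref{sec:strings} that any minimal rare word has the form $aXb$ where $X$ is a maximal repeat of $T$ and $a,b$ are characters, and that when $f_{T}(aXb)=0$ both ``witnesses'' $aX$ and $Xb$ still occur in $T$, i.e.\ $a \in \Sigma^{\ell}_{T}(X)$ and $b \in \Sigma^{r}_{T}(X)$. Thus, to enumerate all minimal absent words it suffices to iterate over all maximal repeats $X$ via Theorem~\ref{thm:enumerator} and, for every pair $(a,b) \in \Sigma^{\ell}_{T}(X) \times \Sigma^{r}_{T}(X)$ such that $aXb$ does not occur in $T$, emit one tuple.

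Concretely, I will invoke Theorem~\ref{thm:enumerator} and restrict attention to those right-maximal strings that are also left-maximal, exactly as in \Lemma{lemma:minimalRareWords}. For a maximal repeat $X$ this yields $\repr(X)$ together with the list of left-extension characters $a_1,\dots,a_h$ and the representations $\repr(a_1 X),\dots,\repr(a_h X)$. The $\mathtt{chars}$ array of $\repr(X)$ is $\Sigma^{r}_{T}(X)$ in lexicographic order, and the $\mathtt{chars}$ array of $\repr(a_i X)$ is $\Sigma^{r}_{T}(a_i X)$ in lexicographic order. For each $i \in [1..h]$ I perform a linear merge of these two sorted arrays; every character $b$ that appears in $\repr(X)$ but not in $\repr(a_i X)$ produces a minimal rare word $a_i X b$ that does not occur in $T$, which I emit as $(a_i,\,b,\,|X|,\,\INTERVAL{X})$, reading $\INTERVAL{X}$ off the boundary entries of the $\mathtt{first}$ array of $\repr(X)$.

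Correctness follows from the structural fact above: for fixed maximal $X$, a string $a_i X b$ is a minimal absent word iff $a_i \in \Sigma^{\ell}_{T}(X)$, $b \in \Sigma^{r}_{T}(X)$, and $b \notin \Sigma^{r}_{T}(a_i X)$, which is exactly what the merge detects. For the time bound, enumerating maximal repeats and their one-step left extensions costs $O(nd)$ by Theorem~\ref{thm:enumerator}; each merge for a pair $(X,a_i)$ costs $O(|\Sigma^{r}_{T}(X)| + |\Sigma^{r}_{T}(a_i X)|)$, and summed over all such pairs this is dominated by the aggregate size of the representations already examined by the enumerator, hence again by $O(nd)$; writing the output adds $O(\occ)$. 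No auxiliary array indexed by the alphabet is needed, so the working space matches that of Theorem~\ref{thm:enumerator}, i.e.\ $O(\sigma^2 \log^2 n) = o(n)$ bits.

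The main subtlety I anticipate is the time accounting of the merges. A naive local bound is fine per pair, but one has to make sure that summing it against the number of Weiner links does not blow up. The clean way to see it is that every character scanned inside $\repr(a_i X)$ is already charged to the Weiner link $(X,a_i)$ by the analysis of Theorem~\ref{thm:enumerator}, and every character scanned in $\repr(X)$ during the iteration for $a_i$ can be charged either to the node $X$ (for the first time it appears) or to an output tuple emitted on that scan, so that the total work beyond $O(nd)$ is strictly proportional to $\occ$. This yields the stated $O(nd+\occ)$ running time and completes the lemma.
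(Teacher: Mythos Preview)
Your approach is correct and is exactly the ``slight variation of \Lemma{lemma:minimalRareWords}'' the paper alludes to: enumerate maximal repeats $X$, and for each left extension $a_i$ compare $\Sigma^{r}_{T}(X)$ with $\Sigma^{r}_{T}(a_iX)$, emitting $a_iXb$ for every $b$ in the former but not the latter. Using a sorted merge instead of the $\mathtt{freq}$ array of \Lemma{lemma:minimalRareWords} is a cosmetic difference and, as you note, even spares the $O(\sigma\log n)$-bit auxiliary array.

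One point to tighten: your charging for the scan of $\repr(X).\mathtt{chars}$ is not quite right. You cannot charge it ``to the node $X$ (for the first time it appears)'', because that list is rescanned once for \emph{every} $a_i$, i.e.\ $|\Sigma^{\ell}_{T}(X)|$ times. The clean accounting is per pair $(X,a_i)$: each $b\in\Sigma^{r}_{T}(X)$ either also lies in $\Sigma^{r}_{T}(a_iX)$, in which case its scan is charged (with a factor $2$) to the same element already paid for by $\mathtt{extendLeft}$ via the $e$-count of Theorem~\ref{thm:enumerator}; or it does not, in which case $a_iXb$ is output and the scan is charged to $\occ$. Since $\Sigma^{r}_{T}(a_iX)\subseteq\Sigma^{r}_{T}(X)$, this covers everything, and the total is $O(nd+\occ)$ as claimed.
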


For reasons of space, we assume throughout the paper that $d$ is the time per element in the output of a $\mathtt{rangeDistinct}$ data structure that is implicit from the context.

\section{Computing the border of all right-maximal substrings} \label{appendix:borders}

As mentioned in the introduction, computing the exact variance of the number of occurrences of a string $W$ in a random text of length $n$ can be mapped to the computation of the longest border of all suffixes of $W$ \cite{apostolico1997annotated}, and thus takes $O(|W|)$ time using the Morris-Pratt algorithm \cite{morris1970linear}. To compute the longest border of \emph{all} right-maximal substrings of a text $T$, as well as of \emph{all} substrings $Wb$ of $T$ such that $b \in [0..\sigma]$ and $W$ is right-maximal, in overall linear time on $|T|$, we need the following algorithm described in \cite{apostolico2000efficient}, which we sketch here for completeness:

\begin{theorem}[\cite{apostolico2000efficient}] \label{theorem:apostolico}
Let $T \in [1..\sigma]^{n}$ be a string. There is an algorithm that computes $bord(W)$ for all right-maximal substrings $W$ of $T$, and for all substrings $W=Vb$ of $T$ where $b \in [1..\sigma]$ and $V$ is right-maximal, in $O(n)$ words of space. The running time of this algorithm is linear in $n$ and depends on $\sigma$.
\end{theorem}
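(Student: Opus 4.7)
My plan begins with the observation that every border of a right-maximal string is itself right-maximal. Indeed, if $W$ is right-maximal and $V$ is its longest proper border, then $V$ appears in $T$ both as a prefix of $W$, hence followed by $W[|V|+1]$, and as a suffix of $W$, hence followed by any of the at least two distinct characters that follow $W$ in the circular $T$; so $V$ is right-maximal, and the recursive structure $\mathcal{B}(W)=\{|V|\}\cup\mathcal{B}(V)$ extends this to every border by induction. Consequently, the longest border of the label $\ell(v)$ of an internal node $v$ of $\ST_T$ is the label of the node $u\neq v$ with the largest $|\ell(u)|$ subject to being (i) a proper $\SLT_T$-ancestor of $v$, so that $\ell(u)$ is a proper suffix of $\ell(v)$, and (ii) an $\ST_T$-ancestor of $v$, so that $\ell(u)$ is a prefix of $\ell(v)$. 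The computation of $bord(\ell(v))$ for all internal $v$ thus reduces to a simultaneous-ancestor query across the two trees $\ST_T$ and $\SLT_T$, which share the same set of internal nodes.

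With this reduction in hand, I would perform a depth-first traversal of $\SLT_T$, marking each node as active on entry and inactive on exit, so that the active set is always the path of $\SLT_T$ from its root down to the node currently being processed. For each $v$, the value $bord(\ell(v))$ is then the deepest active proper $\ST_T$-ancestor of $v$, i.e., a marked-ancestor query on $\ST_T$ with LIFO updates driven by the outer $\SLT_T$ traversal. To extend the computation to strings $Vb$ with $V$ right-maximal and $b\in[1..\sigma]$, observe that any nonempty border of $Vb$ has the form $V'b$ with $V'$ a border of $V$ (possibly empty) satisfying $V[|V'|+1]=b$; since the chain of borders of $\ell(v)$ is exactly the chain of currently-active $\ST_T$-ancestors of $v$, I would scan this chain Morris-Pratt-style while processing $v$, and for each outgoing $\ST_T$-edge of $v$ whose first character is $b$, pick the longest compatible $V'$.

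The main obstacle is implementing the marked-ancestor structure on $\ST_T$ so that activations, deactivations, and queries amortize collectively to $O(n)$ time in $O(n)$ words of space. The key asset is the LIFO discipline inherited from the outer $\SLT_T$ DFS: deactivations undo activations in reverse order, which makes it natural to keep, at each internal node of $\ST_T$, a small per-character table (one entry per outgoing direction) pointing to the deepest active descendant in that direction. Updates then amount to stack pushes and pops, each push charged to a distinct Weiner link of $\ST_T$ (explicit or implicit), while each Morris-Pratt-style scan is charged to the borders it consumes. Since Theorem~\ref{thm:enumerator} enumerates the nodes of $\SLT_T$ in exactly the required DFS order and the total number of Weiner links is $O(n)$, the overall running time is linear in $n$ with a multiplicative overhead that depends on $\sigma$ through the per-direction tables.
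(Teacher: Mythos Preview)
Your reduction is correct and attractive: since every border of a right-maximal string is right-maximal, $bord(\ell(v))$ is indeed the label length of the deepest node that is simultaneously a proper $\SLT_T$-ancestor and an $\ST_T$-ancestor of $v$, and the $\SLT_T$ DFS with an active-path marking does turn this into a marked-ancestor problem on $\ST_T$ with LIFO updates. That is a genuinely different route from the paper. The paper never frames the problem as marked-ancestor; instead it stores at every node $v$ with $\ell(v)=W$ two character-indexed arrays $\mathtt{right}_W$ and $\mathtt{left}_W$ with $\mathtt{right}_W[a]=a|W$ and $\mathtt{left}_W[b]=W|b$, reads $bord(aW)$ directly as $\mathtt{right}_W[a]+1$, and rebuilds $\mathtt{right}_{aW}$ and $\mathtt{left}_{aW}$ by copying the corresponding arrays of the longest border of $aW$ and overwriting one entry. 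The $O(n)$ space follows because each cell of a $\mathtt{right}$ array is charged to a (possibly implicit) Weiner link and each cell of a $\mathtt{left}$ array to a suffix-tree edge; the $\sigma$ dependence arises only from locating the right cell to copy inside the sorted extension lists.

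Where your proposal falls short is the implementation layer. The sentence ``keep, at each internal node of $\ST_T$, a small per-character table pointing to the deepest active descendant in that direction; updates then amount to stack pushes and pops, each push charged to a distinct Weiner link'' does not actually describe a marked-ancestor data structure: knowing the deepest active descendant in each child direction of some node $u$ does not let you answer, in $O(1)$ or amortized $O(1)$, which is the deepest active \emph{ancestor} of an arbitrary query node $v$, and you do not explain how a query is carried out or why its cost is bounded. Likewise, the ``Morris-Pratt-style scan'' you invoke for the strings $Vb$ does not amortize to $O(n)$ on a tree the way it does on a single pattern: from a fixed $V$ you restart the scan from $bord(V)$ once per child in $\SLT_T$ and once per right-extension $b$, so the usual telescoping potential argument breaks, and ``charged to the borders it consumes'' is not a bound unless you show those borders are globally $O(n)$. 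The paper sidesteps both issues precisely by caching $a|W$ and $W|b$ in the $\mathtt{right}$ and $\mathtt{left}$ arrays, so that each lookup is $O(1)$ and only the array reconstruction pays a $\sigma$-dependent price. If you want to keep your marked-ancestor viewpoint, you need either to cite an explicit $O(n)$ LIFO marked-ancestor result and show it applies here, or to unfold your tables into something equivalent to the paper's $\mathtt{right}$ arrays and give the corresponding charging argument.
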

\begin{proofSketch}
We build the suffix tree $\ST_T$ of $T$, and we assume that every node $v$ of $\ST_T$ stores sets $\Sigma_{T}^{\ell}(\ell(v))$ and $\Sigma_{T}^{r}(\ell(v))$ as lexicographically sorted lists. We perform a depth-first traversal of the \emph{suffix-link tree} of $T$, and we store in each node $v$ with label $\ell(v)=W$ the arrays $\mathtt{right}_{W}$ and $\mathtt{left}_{W}$ described in Section \ref{sec:strings}: recall that $\mathtt{right}_{W}$ (respectively, $\mathtt{left}_{W}$) is indexed by all characters $a \in \Sigma_{T}^{\ell}(W)$ (respectively, $b \in \Sigma_{T}^{r}(W)$), and it stores value $a|W$ at position $\mathtt{right}_{W}[a]$ (respectively, value $W|b$ at position $\mathtt{left}_{W}[b]$). Clearly, if $\mathtt{right}_{W}[a]>0$, then $bord(aW) = \mathtt{right}_{W}[a]+1$. If $\mathtt{right}_{W}[a]=0$, then $bord(aW)$ is either one (if $a$ matches the last character of $W$) or zero. Once we know $bord(aW)$, we compute array $\mathtt{right}_{aW}$ by exploiting the identity $\mathcal{B}(aW) = \{bord(aW)\} \cup \mathcal{B}(V)$, where $V$ is the longest border of $aW$ (see Section \ref{sec:strings}): specifically, for every character $c \in \Sigma_{T}^{\ell}(aW)$, we know that $c$ belongs also to $\Sigma_{T}^{\ell}(V)$, thus we set $\mathtt{right}_{aW}[c] = \mathtt{right}_{V}[c]$ if $c \neq d$, and we set $\mathtt{right}_{aW}[d] = bord(aW)$, where $d$ is the character that precedes the suffix of $aW$ of length $bord(aW)$. Since we know that every character $c \in \Sigma_{T}^{r}(aW)$ also belongs to $\Sigma_{T}^{r}(V)$, we can compute array $\mathtt{left}_{aW}$ from $\mathtt{left}_{V}$ in the same way.

Every cell of every array $\mathtt{right}_{W}$ can be charged to a (possibly implicit) Weiner link of $\ST_T$, and every cell of every array $\mathtt{left}_{W}$ can be charged to an edge of $\ST_T$, thus the algorithm uses $O(n)$ words of space overall in addition to $\ST_T$. However, copying $\mathtt{right}_{aW}[c]$ from $\mathtt{right}_{V}[c]$, where $V$ is the longest border of $aW$ (respectively, $\mathtt{left}_{aW}[c]$ from $\mathtt{left}_{V}[c]$) requires retrieving $c$ from the list of left extensions (respectively, right extensions) of $V$, or merging such list with the corresponding list of $aW$, which introduces a dependency on $\sigma$.
\end{proofSketch}

A dependency on $\sigma$ can be problematic in data mining applications, where the alphabet could be the result of a dense discretization of a continuous range (see e.g. \cite{keogh2002finding,lin2007experiencing} and references therein). To make Theorem \ref{theorem:apostolico} independent of $\sigma$, we start from generalizing the algorithm described in \cite{String_matching_algorithms_and_automata} to a trie, counting \emph{return arcs} that do not point to the root:
\begin{lemma} \label{lemma:returnArcs}
Let $T=(V,E,\sigma)$ be a trie on alphabet $[1..\sigma]$, let $\ell(v)=\ell(e_k) \cdot \ell(e_{k-1}) \cdot \cdots \cdot \ell(e_1)$ be the label of a node $v \in V$ that is reachable from the root with path $e_1, e_2, \dots, e_k$, where $e_i \in E$ for all $i \in [1..k]$, and let $a|v = w \in V : \ell(w)=a|\ell(v)$. The set of \emph{return arcs} $E'=\{(v,a|v) : v \in V, a \in [1..\sigma], a|v \neq \emptyset\}$ satisfies $|E'| \leq 2|V|$. 
\end{lemma}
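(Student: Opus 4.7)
My plan is to exhibit a charging scheme that assigns each return arc to a structural witness in the trie, with each node receiving at most two charges in total. The starting observation is that for any arc $e=(v,a|v)\in E'$ with character $a$, writing $w=a|v$, the string $a\ell(w)$ is a suffix of $\ell(v)$, so there is a unique ancestor $u_e$ of $v$ in the trie with $\ell(u_e)=a\ell(w)$; the parent of $u_e$ has label $\ell(w)$ (obtained by deleting the leading character), hence $u_e$ is precisely the $a$-child of $w$ in the trie. This sets up a canonical correspondence between return arcs and pairs consisting of a trie edge $(w,u_e)$ and a descendant $v$ of $u_e$.

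I would partition $E'$ into \emph{primary} arcs, for which $\ell(a|v)$ is the longest border of $\ell(v)$, and \emph{secondary} arcs otherwise. Each node $v$ is the source of at most one primary arc, so charging each primary arc to its source gives $|E'_{\text{prim}}|\le|V|$. For the secondary arcs I plan to use the identity that if $B_1\supsetneq B_2\supsetneq\cdots$ are the borders of $\ell(v)$ in decreasing length and $c_i$ is the character preceding $B_i$ in $\ell(v)$, then $c_{i+1}$ is also the character preceding $B_{i+1}$ \emph{inside} $B_i$; this follows because $B_i$ is a suffix of $\ell(v)$ and $B_{i+1}$ a suffix of $B_i$, so the relevant positions coincide. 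Hence the sequence of preceding characters along the borders of $\ell(v)$ is intrinsic to the borders themselves and coincides with the ``own longest-border preceding character'' of each border viewed as a standalone string. Each secondary arc from $v$ corresponds to the first occurrence of a new character along this failure chain, and I would charge it to the node (or to the trie edge exiting that node toward $v$, if needed for injectivity) responsible for contributing that new character.

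The step I expect to be the main obstacle is proving the secondary charge is injective: since several distinct descendants of a common failure-chain ancestor $x$ can each use $x$'s preceding-character as a fresh value along their chain, a naive charge to $x$ is not injective. I would resolve this by refining the charge to a trie edge distinguishing different sources, exploiting the canonical correspondence from the first step, which attaches each source $v$ to a specific ancestor $u_e$ in the trie via a uniquely determined edge. Once this injectivity is established we obtain $|E'_{\text{sec}}|\le|V|$, which combined with the primary bound yields $|E'|\le 2|V|$.
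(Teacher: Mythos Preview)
Your primary/secondary split is sound, and your observation that the character preceding each border $B_{i+1}$ as a suffix of $\ell(v)$ coincides with the character preceding $B_{i+1}$ as a suffix of $B_i$ is correct. The gap is exactly where you locate it, and neither of your two candidate charges closes it. Take a trie containing nodes $v_1,v_2$ with $\ell(v_1)=abaaaba$ and $\ell(v_2)=abaaaaba$. In both strings the longest border is $aba$, preceded by $a$, and the next border is $a$, preceded by $b$; hence both carry a secondary $b$-arc with $B_{j-1}=aba$, and both charge to the node $x$ with $\ell(x)=aba$. The root-to-node paths of $v_1,v_2$ agree through depth~5, so the edge leaving $x$ toward $v_1$ and toward $v_2$ is the same edge. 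Your $u_e$-refinement fails too: for both arcs the target is the depth-$1$ node with label $a$, so $u_e$ is the same depth-$2$ node (label $ba$) in both cases. The structural reason is that all of your proposed charges depend only on the target border and its immediate neighbourhood in the trie, which is shared by arbitrarily many sources; nothing in your scheme records information that grows with $|\ell(v)|$.

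The paper's argument avoids this by using a different partition and a different anchor. An arc $(v,a|v)$ is \emph{type~1} if $v$ has an $a$-child in the trie, and \emph{type~2} otherwise; type-1 arcs are charged to that child. A type-2 arc with target border $B$ is charged to the ancestor $w$ of $v$ at depth $|\ell(v)|-|B|$, i.e.\ the node with $\ell(v)=B\cdot\ell(w)$. This charge moves with $|\ell(v)|$: in the example above it sends $(v_1,b)$ to depth $6$ and $(v_2,b)$ to depth $7$, hence to distinct nodes. Injectivity is then argued by a case analysis on whether two putative sources lie on a common root-to-leaf path. If you want to rescue your primary/secondary decomposition, you would need a charge for secondary arcs that similarly incorporates the depth of the source; the failure-chain information alone is not enough.
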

\begin{proof}
We say that a return arc $(v,a|v) \in E'$ is of \emph{type 1} if there is an edge $e=(v,w) \in E$ with $\ell(e)=a$ , and we say that it is of \emph{type 2} otherwise. The total number of type-1 return arcs is at most $|V|$, thus we focus on type-2 return arcs. Let $A=\ell(v)$ and let $B=A[1..a|\ell(v)]$. We charge a type-2 return arc $(v,a|v)$ to the vertex $w$ that satisfies $A = B \cdot \ell(w)$. Assume that two distinct type-2 return arcs $(u_1,v_1)$, $(u_2,v_2)$ are charged to the same vertex $w$. Clearly it must be $u_1 \neq u_2$. If $u_1$ and $u_2$ do not lie on the same path of $T$, then $w$ must be the lowest common ancestor of $u_1$ and $u_2$, or one of its ancestors (excluding the root). Let $W$ be the label of the path from $w$ to the lowest common ancestor of $u_1$ and $u_2$: clearly $\ell(u_1)=X_1 \cdot a \cdot W \cdot \ell(w)$ and $\ell(u_2)=X_2 \cdot b \cdot W \cdot \ell(w)$, where $a$ and $b$ are distinct characters and $X_1$ and $X_2$ are strings of the same length, but at the same time it must be $X_1 \cdot a \cdot W = X_2 \cdot b \cdot W$, a contradiction. Assume thus that $u_1$ and $u_2$ lie on the same path of $T$: without loss of generality, let $u_1$ be an ancestor of $u_2$. 
Further assume that there is an edge $e=(u_1,v) \in E$ with $\ell(e)=a$. Then it must be that $\ell(u_1)=Y_1\cdot \ell(w)=X_1\cdot b\cdot Y_1$ and $\ell(u_2)=Y_2\cdot a \cdot Y_1\cdot \ell(w)=X_2\cdot Y_2\cdot a\cdot Y_1$ and $a \neq b$, but at the same time it must be that $a \cdot Y_1 = b \cdot Y_1$, a contradiction.
{$\Box$\vskip1ex}\end{proof}

\begin{lemma} \label{lemma:linear1}
Let $T \in [1..\sigma]^{n}$ be a string. There is an algorithm that computes $bord(W)$ for all right-maximal substrings $W$ of $T$ in $O(n)$ time and words of space.
\end{lemma}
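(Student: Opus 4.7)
The plan is to adapt the algorithm of Theorem~\ref{theorem:apostolico} by replacing each dense array $\mathtt{right}_W$ with a sparse representation $\mathtt{right}^{*}_W$ that stores only the pairs $(c, c|W)$ with $c|W>0$, accessed as a hash table keyed by $c$. Since the internal nodes of $\SLT_T$ are exactly the right-maximal substrings of $T$, Lemma~\ref{lemma:returnArcs} applied to $\SLT_T$ yields $\sum_W |\mathtt{right}^{*}_W| \leq 2(n-1)$, which simultaneously delivers the $O(n)$-word space budget and caps the total bookkeeping associated with these tables.

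The algorithm performs a DFS of $\SLT_T$ and, upon reaching a node $W=aW'$ from its DFS parent $W'$, computes $bord(W)=\mathtt{right}^{*}_{W'}[a]+1$ with a single hash lookup, falling back to the base cases of Apostolico's scheme ($bord(W)=1$ if $a$ equals the last character of $W$, otherwise $0$) when $a$ is absent from $\mathtt{right}^{*}_{W'}$; the last character of $W$ is maintained as a scalar along the DFS path, since prepending never changes it. Next I build $\mathtt{right}^{*}_W$ from $\mathtt{right}^{*}_V$, where $V$ is the longest border of $W$: because $V$ is a suffix of $W$, it is the $\SLT_T$-ancestor of $W$ at depth $bord(W)$, which I locate in $O(1)$ by indexing the DFS stack by $\SLT_T$-depth. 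The character $d$ that precedes the suffix of length $bord(W)$ inside $W$ is exactly the label of the $\SLT_T$-edge from depth $bord(W)$ to depth $bord(W)+1$ on the DFS path, also accessible in $O(1)$.

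The crucial asymmetry is that, rather than iterating over $\mathtt{right}^{*}_V$ — whose size may be $\Theta(\sigma)$ and would reintroduce exactly the dependency I am trying to remove — I iterate over $\Sigma^{\ell}_{T}(W)$ as supplied by $\repr(W)$, setting $\mathtt{right}^{*}_W[c] = \mathtt{right}^{*}_V[c]$ by a single hash lookup when $c\neq d$ and overriding the entry for $d$ with the value $bord(W)$. Correctness of this copy step rests on the inclusion $\Sigma^{\ell}_{T}(W) \subseteq \Sigma^{\ell}_{T}(V)$ together with the identity proved inside Theorem~\ref{theorem:apostolico}. Summed over all right-maximal $W$, the construction costs $O(\sum_W |\Sigma^{\ell}_{T}(W)|)$, which equals the total number of (explicit plus implicit) Weiner links of $\SLT_T$ and is therefore $O(n)$; together with the $O(n)$-time DFS and $O(1)$ work per node for computing $bord$ and retrieving $d$, this yields the linear time bound.

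The main obstacle I foresee is certifying the output-sensitivity of the update step: the replacement of Apostolico's merge of $\mathtt{right}_V$ with the character list of $W$ by a one-sided iteration over $\Sigma^{\ell}_{T}(W)$ is precisely what eliminates the $\sigma$ factor, so I must argue that membership of each $c$ in $\mathtt{right}^{*}_V$ — and thus distinguishing ``absent'' from ``present with value zero'' — really can be decided by a single $O(1)$ query. A secondary concern is that hashing provides only expected bounds; a deterministic variant stores all $\mathtt{right}^{*}$ entries in a single global pool of size $O(n)$, whose cells are in bijection with the return arcs counted by Lemma~\ref{lemma:returnArcs}, and indexes each $\mathtt{right}^{*}_W$ locally by a short linked list over those cells.
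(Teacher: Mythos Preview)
Your overall strategy—replacing the dense array $\mathtt{right}_W$ by the sparse set of pairs $(c,c|W)$ with $c|W>0$ and invoking Lemma~\ref{lemma:returnArcs} to bound the total size by $O(n)$—is exactly the paper's approach. The gap lies in how you populate this set from that of the longest border $V$. You explicitly avoid iterating over $\mathtt{right}^{*}_V$ on the grounds that ``its size may be $\Theta(\sigma)$ and would reintroduce exactly the dependency'', but this concern is unfounded: since $\mathcal{B}^{r}(aW)$ is $\mathcal{B}^{r}(V)$ with at most one entry inserted or overwritten, we have $|\mathcal{B}^{r}(V)|\le |\mathcal{B}^{r}(aW)|$, so the total cost of copying each $\mathcal{B}^{r}(V)$ in full is bounded by $\sum_W |\mathcal{B}^{r}(W)|=O(n)$, by the very lemma you already used for the space bound. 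The paper exploits this directly: it keeps each $\mathcal{B}^{r}(W)$ as a list sorted by character, merges it once with the sorted list $\Sigma^{\ell}_T(W)$ to read off every $bord(aW)$, and builds $\mathcal{B}^{r}(aW)$ by copying all of $\mathcal{B}^{r}(V)$ and scanning it linearly to update the single entry for $d$. Every step is then charged to a return arc or to a Weiner link, so the running time is deterministic $O(n)$.

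Your detour through hashing is therefore unnecessary, and it leaves the stated worst-case bound unproven: hashing gives only expected $O(1)$ per access, and your linked-list fallback does not support $O(1)$ lookup, so the cost of probing $\mathtt{right}^{*}_V$ once per $c\in\Sigma^{\ell}_T(aW)$ becomes $\sum_W |\Sigma^{\ell}_T(W)|\cdot|\mathtt{right}^{*}_{V}|$, which you have not shown to be $O(n)$. The fix is simply to copy $\mathcal{B}^{r}(V)$ wholesale, as the paper does.
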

\begin{proof}
We proceed as in Theorem \ref{theorem:apostolico}, but at every node $v$ of $\ST_T$ with $\ell(v)=W$, we store set $\mathcal{B}^{r}(W)$, sorted in lexicographic order, rather than $\mathtt{right}_{W}$. Recall that $\mathcal{B}^{r}(W)$ is the set of pairs $\{ (a,a|W) : a \in [1..\sigma], a|W \neq 0 \}$, i.e. a representation of the return arcs of Lemma \ref{lemma:returnArcs}. At every node $v$ we merge $\mathcal{B}^{r}(W)$ with the lexicographically sorted list of characters in $\Sigma_{T}^{\ell}(W)$, setting $bord(aW)=a|W$ (which might be zero) for every left extension $aW$. Once we know $bord(aW)$, we build $\mathcal{B}^{r}(aW)$ by copying the entire $\mathcal{B}^{r}(V)$, where $V$ is the longest border of $aW$, and by updating or inserting pair $(d,d|aW)$ using a linear scan of $\mathcal{B}^{r}(V)$, where $d$ is the character that precedes the suffix of $aW$ of length $bord(aW)$. This process touches every return arc of Lemma \ref{lemma:returnArcs} a constant number of times. 
{$\Box$\vskip1ex}\end{proof}

Lemma \ref{lemma:linear1} can be clearly applied to any trie $\mathcal{T}$ of size $n$ on an alphabet of size $\sigma$, in which every node stores its children in lexicographic order: the space used by such algorithm in addition to the trie is $O(\min\{n,\lambda\sigma\})$, where $\lambda$ is the length of a longest path in $\mathcal{T}$. However, Lemma \ref{lemma:returnArcs} does not generalize to radix trees, thus we cannot use it to bound the construction time of $\mathtt{left}$ arrays in Theorem \ref{theorem:apostolico}. To achieve this, it suffices to replace $\mathtt{left}$ arrays with suitable stacks:

\begin{lemma} \label{lemma:linear2}
Let $T \in [1..\sigma]^{n}$ be a string. There is an algorithm that computes $bord(W)$ for all right-maximal substrings $W$ of $T$, and for all substrings $W=Vb$ of $T$ where $b \in [0..\sigma]$ and $V$ is right-maximal, in $O(n)$ time and words of space.
\end{lemma}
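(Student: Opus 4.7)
The plan is to extend \Lemma{lemma:linear1} with a second traversal that computes $W|b$ for every right-maximal $W$ and every $b\in\Sigma_T^r(W)$; once $W|b$ is known, $bord(Wb)=W|b+1$ if $W|b>0$, $bord(Wb)=1$ if $W[0]=b$, and $bord(Wb)=0$ otherwise. So it suffices to compute $W|b$ for every such pair within the stated budget.

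The algorithm runs in three phases, each in $O(n)$ time and $O(n)$ words. First, I would run \Lemma{lemma:linear1} and, for every right-maximal $W$ visited by its DFS on $\SLT_T$, record in a table the values $|W|$, $bord(W)$, $W[0]$, $W[bord(W)]$, and $\repr(W)$. Each is available in constant time during the traversal: in particular $W[bord(W)]$ is the character prepended to reach the stack ancestor at depth $|W|-bord(W)$, and the total size of all $\repr(W)$ is $O(n)$ since $\sum_W|\Sigma_T^r(W)|$ counts the edges of $\ST_T$. Second, I would build in one linear scan the border tree $\mathcal{T}_B$ rooted at $\varepsilon$, whose parent of every $W$ is $bord(W)$. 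Third, I would DFS $\mathcal{T}_B$ maintaining $\sigma+1$ global integer stacks $S_0,\ldots,S_\sigma$: on entering a node $W$ from its border-tree parent $V=bord(W)$, push $|V|$ onto $S_{W[|V|]}$ (skipping when $|V|=0$), and on leaving pop the same entry; while at $W$, for each $b\in\Sigma_T^r(W)$ taken from the stored $\repr(W)$ I would read $W|b$ as the top of $S_b$ (or $0$ if empty) and emit $bord(Wb)$.

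Correctness follows because the borders of $W$ form a chain $\varepsilon=V^{(0)},V^{(1)},\ldots,V^{(k)}=bord(W)$ that coincides with the root-to-$W$ path in $\mathcal{T}_B$. The DFS pushes, in order, the nonzero lengths $|V^{(1)}|,\ldots,|V^{(k)}|$ on entering $V^{(2)},\ldots,V^{(k)},W$, each onto the stack of a character that equals $W[|V^{(j)}|]$ because $V^{(j+1)}$ (respectively $W$) is a prefix of $W$ strictly longer than $|V^{(j)}|$. Hence each $S_c$ contains, in strictly increasing order, the border lengths of $W$ followed by $c$ in $W$, and its top is $W|c$ by definition. The time is $O(n)$ since every border-tree edge and every right extension are touched $O(1)$ times, and the space is $O(n)$ words because the phase-1 table and $\mathcal{T}_B$ each fit in $O(n)$ words while the combined content of the $\sigma+1$ stacks at any moment equals the current DFS depth in $\mathcal{T}_B$, hence is at most $n$.

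The main obstacle I anticipate is resisting the temptation to maintain the stacks directly inside the original DFS on $\SLT_T$ of \Lemma{lemma:linear1}: the border chain of a right-maximal string is not, in general, an incremental update of the border chain of its $\SLT_T$ parent (the string with its first character removed), so a single-pass scheme piggybacking on the $\SLT_T$ traversal without first materialising $\mathcal{T}_B$ would, in the worst case, require an unbounded number of pops and pushes at each transition. Materialising $\mathcal{T}_B$ as an $O(n)$-size by-product of \Lemma{lemma:linear1} is precisely what turns the stack maintenance into an amortised constant push/pop per border-tree edge.
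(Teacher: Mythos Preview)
Your three-phase construction is correct and achieves the stated $O(n)$ bounds: materialising the border tree $\mathcal{T}_B$ on the right-maximal substrings and then running a DFS with per-character stacks does yield $W|b$ as the top of $S_b$, exactly as you argue, and the phase-1 bookkeeping you describe is all available during the traversal of \Lemma{lemma:linear1}.

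The route, however, is genuinely different from the paper's, and your final paragraph misjudges the obstacle. The paper \emph{does} piggyback directly on the single $\SLT_T$ traversal of \Lemma{lemma:linear1}. It keeps $\sigma$ arrays $S_1,\dots,S_\sigma$ indexed by string length: when the DFS is at $W$, position $k$ of $S_b$ holds $V_k|b$, where $V_k$ is the length-$k$ suffix of $W$. Moving to a child $aW$ requires, for each $b\in\Sigma_T^r(aW)$, a single random-access read at position $bord(aW)$ of $S_b$ (this gives $B|b$ for $B$ the longest border of $aW$, from which $aW|b$ follows in $O(1)$) and a single push; backtracking pops once per right extension. The point you missed is that one need not maintain the \emph{border chain} of $W$ on the stacks at all: because $b\in\Sigma_T^r(aW)$ implies $b\in\Sigma_T^r(V_k)$ for every suffix $V_k$ of $aW$, each relevant $S_b$ is densely filled along the current $\SLT_T$ path, so random access by suffix length replaces your separate border-tree DFS. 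There is no ``unbounded number of pops and pushes at each transition''; there is exactly one of each per right extension.

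What each approach buys: yours is conceptually clean and uses the stacks only as stacks (push/pop/top), but it pays $\Theta(n)$ words up front to store the phase-1 table and $\mathcal{T}_B$. The paper's single-pass scheme avoids materialising anything beyond the $\sigma$ arrays themselves, whose combined size is $O(\min\{n,\lambda\sigma\})$; this is precisely what later allows the reduction to $o(n)$ bits of working space in Theorem~\ref{theorem:surprisingStrings}, a reduction your multi-phase scheme would not support without substantial further work.
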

\begin{proof}
We run the algorithm in Lemma \ref{lemma:linear1}, keeping $\sigma$ stacks $S_1,S_2,\dots,S_{\sigma}$. Assume that, when we visit the node $v$ of $\ST_T$ with $\ell(v)=W$, the top of stack $S_b$ for all $b \in \Sigma_{T}^{r}(V)$ stores value $W|b$ (which might be zero). Assume that, in the depth-first traversal of the suffix-link tree of $T$, we choose to visit the node $w$ with $\ell(w)=aW$ next: then, we iterate over all characters $b \in \Sigma_{T}^{r}(aW)$, we compute $aW|b$ by accessing position $bord(aW)$ of stack $S_b$, and we push $aW|b$ on $S_b$. This works since, if $aW$ can be extended to the right by character $b$, then every suffix of $aW$ can be extended to the right with character $b$ as well. This process takes overall $O(n)$ time, and the size of all stacks $S_1,\dots,S_{\sigma}$ is $O(\min\{n,\lambda\sigma\})$, where $\lambda$ is the length of a longest repeat of $T$, since every element in every stack can be charged to an edge of $\ST_T$.
{$\Box$\vskip1ex}\end{proof}

The information stored by Lemma \ref{lemma:linear2} is enough to compute in $O(n)$ time and space the longest border of all minimal rare words that occur at least once in $T$. Adapting Lemma \ref{lemma:linear2} to compute the longest border of all minimal \emph{absent} words of $T$ is also easy:

\begin{lemma} \label{lemma:maw}
Let $T \in [1..\sigma]^{n}$ be a string. There is an algorithm that computes $bord(W)$ for all minimal absent words $W$ of $T$ in $O(n+\mathtt{occ})$ time and words of space, where $\mathtt{occ}$ is the size of the output.
\end{lemma}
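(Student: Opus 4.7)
My plan is to run the algorithm of Lemma~\ref{lemma:linear2} unchanged and, whenever the suffix-link tree traversal reaches a maximal repeat $X$, to enumerate the MAWs of the form $aXb$ via Lemma~\ref{lemma:minimalAbsentWords} and to compute $bord(aXb)$ for each of them before moving on. The key structural observation is that every proper border $V$ of a MAW $aXb$ satisfies $|V|\le|Xb|$, since $aXb$ itself does not occur in $T$, and is therefore a suffix of $Xb$; if in addition $|V|\ge 2$ then $V$ starts with $a$, so the suffix-occurrence of $V$ inside $Xb$ is immediately preceded by the character $a$. Conversely, if $V'$ is a border of $Xb$ whose suffix-occurrence inside $Xb$ is preceded by $a$, then $aV'$ is a border of $aXb$. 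This bijection, plus the length-one special case $a=b$, yields
$$
bord(aXb)=\begin{cases}
a\,|(Xb)+1 & \text{if } a\,|(Xb)>0,\\
1 & \text{if } a\,|(Xb)=0 \text{ and } a=b,\\
0 & \text{otherwise,}
\end{cases}
$$
where $a\,|(Xb)$ is as defined in Section~\ref{sec:strings}, so it suffices to compute $a\,|(Xb)$ for every triple for which $aXb$ is a MAW.

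To compute $a\,|(Xb)$, I would use the one-to-one correspondence between the borders of $Xb$ of length at least two and the borders $Y$ of $X$ such that $X[|Y|]=b$: each such $Y$ produces a border $Yb$ of $Xb$ whose preceding character inside $Xb$ is $X[|X|-|Y|-1]$. For every right extension $b$ of $X$ that participates in at least one MAW at $X$, I would walk down the chain of borders of $Xb$, starting from $bord(Xb)$ (already produced by Lemma~\ref{lemma:linear2}) and following successive $bord$ pointers; at each border of length $\ell$ I read the preceding character from $\BWT_T$ and write $A[c]\gets\ell$ the first time each character $c$ is seen, using a temporary array $A[0\ltdots\sigma]$. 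Once the chain is exhausted, $A[a]=a\,|(Xb)$ for every $a$, so each MAW $aXb$ at this $(X,b)$ is served in $O(1)$; finally $A$ is cleared by a second traversal of the same chain, so that a single array suffices throughout the whole traversal.

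The main obstacle will be bounding the total cost of all chain walks by $O(n)$: naively a single chain of $Xb$ can have length up to $|X|$, and a straight sum over maximal repeats could therefore be super-linear. The plan is to fall back on a Morris--Pratt-style amortization lifted to the suffix-link tree, in the spirit of Lemma~\ref{lemma:linear1} and Lemma~\ref{lemma:linear2}: every border of $Xb$ encountered along a chain walk is a prefix-suffix of $X$ and therefore corresponds to a distinct ancestor of $X$ in the suffix-link tree, while moving along the chain strictly decreases the string depth, so a standard potential argument charges each chain step either to a node of the suffix-link tree (of which there are $O(n)$) or to a MAW whose lookup forced $A$ to be populated, giving an overall $O(n+\mathtt{occ})$ bound. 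Since the only data structure added on top of Lemma~\ref{lemma:linear2} is the array $A$ of $\sigma+1$ cells and the output itself takes $O(\mathtt{occ})$ words, the claimed time and words-of-space bounds follow.
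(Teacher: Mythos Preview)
Your reduction of $bord(aXb)$ to $a\,|(Xb)$ is correct, and the bijection between borders of $Xb$ of length $\ge 2$ and borders $Y$ of $X$ with $X[|Y|]=b$ is fine. There are, however, two gaps.

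First, ``read the preceding character from $\BWT_T$'' does not make sense as stated: what you need is the character $X[|X|-\ell]$ for a border of length $\ell$ of $Xb$, i.e.\ the character immediately to the left of the suffix $Y$ of $X$ with $|Y|=\ell-1$. This is an internal character of the specific string $X$, not a BWT value. You can recover it by maintaining, during the suffix-link-tree DFS, the stack of edge labels on the current root-to-$X$ path, but you should say so explicitly rather than invoke the BWT.

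Second, and more seriously, the amortization is not established. Your sentence ``every border of $Xb$ \ldots corresponds to a distinct ancestor of $X$ in the suffix-link tree'' is true for a \emph{single} $(X,b)$ pair (because $Y$ is a suffix of $X$), but distinct pairs $(X_1,b)$ and $(X_2,b)$ can hit the \emph{same} ancestor $Y$, so ``distinct ancestor'' does not by itself yield an $O(n)$ bound on the total number of chain steps. The Morris--Pratt potential argument you allude to bounds chain walks by the number of depth \emph{increments} along a single scan; here you restart a full chain walk from scratch at every maximal repeat $X$ and every relevant $b$, and you do not re-use work between ancestors and descendants in the suffix-link tree. If $X_1,X_2,\ldots,X_k$ are maximal repeats with $X_i$ a border of $X_{i+1}$ and each $(X_i,b)$ carries a MAW, your walks cost $\Theta(1+2+\cdots+k)$, and you give no argument preventing $k=\Theta(n)$. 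So the claimed $O(n+\mathtt{occ})$ bound is not proved.

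The paper avoids this entirely by a small change of invariant rather than by chain walking: instead of pushing $aW|b$ only for $b\in\Sigma_T^r(aW)$ as in Lemma~\ref{lemma:linear2}, it pushes $aW|b$ for every $b\in\Sigma_T^r(W)$, i.e.\ for the \emph{parent's} right-extensions. Then at a node $aW$ each minimal absent word $aWb$ has $aW|b$ already on top of $S_b$, and $bord(aWb)$ follows in $O(1)$. Every extra stack element with $b\notin\Sigma_T^r(aW)$ is charged directly to a MAW, giving $O(n+\mathtt{occ})$ time and space without any chain walks.
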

\begin{proof}
We proceed as in Lemma \ref{lemma:linear2}, but when we visit the node $v$ of $\ST_T$ with $\ell(v)=aW$ for some $a \in [1..\sigma]$, we assume that the top of stack $S_b$ for all $b \in \Sigma_{T}^{r}(W)$ stores value $aW|b$ (which might be zero). Assume that, in the depth-first traversal of the suffix-link tree of $T$, we choose to visit the node $w$ with $\ell(w)=caW$ next: then, we iterate over all characters $b \in \Sigma_{T}^{r}(aW)$, we compute $caW|b$ by accessing position $bord(caW)$ of stack $S_b$, and we push $caW|b$ on $S_b$. This works since, if $aW$ can be extended to the right by character $b$, then the proper suffix of the longest border of $caW$ can be extended to the right with character $b$ as well. Every element in every stack can be charged either to an edge of $\ST_T$ or to a minimal absent word of $T$, thus the size of all stacks is $O(\min\{n+\mathtt{occ},\lambda\sigma\})$, where $\lambda$ is the length of a longest repeat of $T$.
{$\Box$\vskip1ex}\end{proof}

\section{Detecting unusual words in small space}

As mentioned in the introduction, in order to compute the exact variance of a substring $W$ of a string $T$, we just need to compute functions $\phi(W)$ and $\gamma(W)$. Specifically, we just need to compute such functions on all right-maximal substrings of $T$, and on all substrings $Wa$ of $T$ such that $W$ is right-maximal \cite{apostolico2000efficient}. Since $\phi(W)$ and $\gamma(W)$ can be computed from $\phi(V)$ and $\gamma(V)$, where $V$ is the longest border of $W$, it is easy to see that we can adapt Lemma \ref{lemma:linear2} as follows. When we visit the node $v$ of $\ST_T$ with $\ell(v)=W$, we store $\phi(W)$ and $\gamma(W)$, and the top of stack $S_b$ for all $b \in \Sigma_{T}^{r}(W)$ stores the following values (which might be zero) in addition to $W|b$:
\begin{eqnarray*}
F_{b}(|W|) & = & D_{b}(|W|) \cdot \big( F_{b}(W|b) -2(|W|-W|b) \cdot G_{b}(W|b) +|T|-2|W|+W|b \big) \\
G_{b}(|W|) & = & D_{b}(|W|) \cdot \big( 1+G_{b}(W|b) \big)
\end{eqnarray*}
%\begin{eqnarray*}
%F_{b}(|W|) & = & H_{b}(|W|) \cdot \Big( |T|-2|W|+W|b \Big) + F_{b}(W|b)\\
%G_{b}(|W|) & = & H_{b}(|W|) +G_{b}(W|b)
%\end{eqnarray*}

where $D_{b}(|W|) = \pi\big( W[W|b+1..|W|-1] \big) \cdot \mathbb{P}[b]$. Note that such values are computed recursively. We derive $\pi\big( W[W|b+1..|W|-1] \big)$ by keeping an additional stack that stores $\pi(V)$ for every suffix $V$ of $W$. The entire process can be implemented using just a data structure that supports $\mathtt{rangeDistinct}$ queries on $\BWT_T$, by implementing Lemma \ref{lemma:linear2} on top of the iterator described in Theorem \ref{thm:enumerator}:

\begin{theorem} \label{theorem:surprisingStrings}
Let $T \in [1..\sigma]^{n-1}\#$ be a string. Given a data structure that supports $\mathtt{rangeDistinct}$ queries on $\BWT_T$, we can compute $\phi(W)$ and $\gamma(W)$ for all right-maximal substrings $W$ of $T$, and for all substrings $W=Vb$ of $T$ such that $b \in [1..\sigma]$ and $V$ is right-maximal, in $O(nd)$ time and in $o(n+\lambda\sqrt{n})$ bits of space in addition to the input and the output, where $d$ is the time taken by the $\mathtt{rangeDistinct}$ operation per element in its output and $\lambda$ is the length of a longest repeat of $T$.
\end{theorem}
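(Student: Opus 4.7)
The plan is to plug the stack-based mechanism of \Lemma{lemma:linear2} on top of the BWT enumerator of Theorem \ref{thm:enumerator}, augmenting each stack entry with the extra book-keeping required by the $\phi/\gamma$ recurrences given in the excerpt. Concretely, while performing the depth-first traversal of $\SLT_T$ induced by iterated calls to $\mathtt{extendLeft}$, I would maintain the $\sigma$ stacks $S_1,\dots,S_\sigma$ of \Lemma{lemma:linear2}, but each entry at the top of $S_b$ for a right-maximal $W$ with $b\in\Sigma_T^r(W)$ would carry the triple $(W|b,F_b(|W|),G_b(|W|))$ specified in the excerpt. When we descend from $W$ to $aW$ we push a new triple on each $S_b$ with $b\in\Sigma_T^r(aW)\subseteq\Sigma_T^r(W)$, computed in $O(1)$ from the previous top via the given recurrences; when we backtrack from $aW$ we pop exactly those entries. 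To evaluate $D_b(|W|)$ in constant time I would keep an auxiliary stack $P$ storing $\log\pi(V)$ for every suffix $V$ of the current DFS path, updating it by one character whenever we descend.

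Given these invariants, at every right-maximal $W$ we directly read $\phi(W)$ and $\gamma(W)$ from a separate small DFS-indexed stack, and for every right extension $Wb$ (whose BWT interval is already returned by $\repr(W)$) we recover $bord(Wb)\in\{W|b+1,\,1,\,0\}$ from the top of $S_b$ together with the last character of $W$. Since the longest border $W'$ of $Wb$ is itself a right-maximal suffix visited earlier on the current DFS path, its cached $\phi(W')$ and $\gamma(W')$ are in scope, and the Morris--Pratt recurrences of the excerpt yield $\phi(Wb),\gamma(Wb)$ in $O(1)$ time. Correctness of the stacked $F_b,G_b$ values reduces, by induction on $|W|$, to the identity $\mathcal{B}(W)=\{bord(W)\}\cup\mathcal{B}(V)$ of Section \ref{sec:strings}, exactly as in the $\mathtt{right}$-array argument inside Theorem \ref{theorem:apostolico}.

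For the time analysis, every push/pop and every constant-time arithmetic update can be charged either to a Weiner link of $\ST_T$ or to a right-extension of a right-maximal node, both of which sum to $O(n)$; on top of this, the enumerator itself runs in $O(nd)$ time. For the space analysis, the enumerator contributes $O(\sigma^2\log^2 n)=o(n)$ bits by Theorem \ref{thm:enumerator}; the $\sigma$ augmented border stacks together hold $O(\lambda\sigma)$ entries of $O(\log n)$ bits each, i.e.\ $O(\lambda\sigma\log n)$ bits, which is $o(\lambda\sqrt n)$ under the standing hypothesis $\sigma\in o(\sqrt n/\log n)$; the auxiliary $\pi$, $\phi$, $\gamma$ and $bord$ path-stacks contribute another $O(\lambda\log n)=o(\lambda\sqrt n)$ bits. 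Summing everything yields $o(n+\lambda\sqrt n)$ bits of working space.

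The main obstacle I anticipate is synchronising the pushes and pops of the $\sigma+1$ auxiliary stacks with the enumerator's own exploration: Theorem \ref{thm:enumerator} returns the children of each $\SLT_T$ node in an arbitrary order and orchestrates its depth-first traversal through a single global stack of $\repr$ objects, so the algorithm must explicitly detect ``return to parent'' events (for instance by pushing a sentinel marker on that global stack together with each pair $(\repr(a_iW),|a_iW|)$) in order to keep each $S_b$ aligned with the current suffix-link path. Verifying that these sentinels do not break the folklore-heuristic argument that bounds the enumerator's stack depth by $O(\log n)$, and hence the total stack space by $O(\sigma^2\log^2 n)$, is the only non-routine step.
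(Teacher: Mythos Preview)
Your high-level plan coincides with the paper's: run \Lemma{lemma:linear2} (which itself runs \Lemma{lemma:linear1}) on top of the enumerator of Theorem~\ref{thm:enumerator}, augmenting each $S_b$ entry with the values $F_b,G_b$ described just before the theorem so that $\phi(Wb),\gamma(Wb)$ can be read off directly. Your space and time accounting are also the paper's.

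Where you diverge is in the obstacle you flag. Synchronising the auxiliary stacks with the enumerator is routine: every popped pair already carries $|a_iW|$, so before processing $a_iW$ one pops each auxiliary stack down to depth $|a_iW|-1$; the total number of pops equals the total number of pushes, which is $O(n)$, and the $O(\log n)$ depth argument for the enumerator's own stack is unaffected. The paper's proof is instead devoted to a point you do not mention: computing $bord(a_iW)$ requires looking up $a_i|W$ in the set $\mathcal{B}^r(W)$ maintained by \Lemma{lemma:linear1}, and \Lemma{lemma:linear1} achieves linear time by \emph{merging} $\mathcal{B}^r(W)$ with the lexicographically sorted list of left extensions, whereas Theorem~\ref{thm:enumerator} delivers the $a_i$ in arbitrary order. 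The paper resolves this with a temporary array $\mathtt{buffer}[1..\sigma]$ of $\sigma\log n$ bits: scatter the pairs of $\mathcal{B}^r(W)$ into $\mathtt{buffer}$, answer each $a_i|W$ query in $O(1)$, then clear only the cells that were written. Without this device your implementation would incur an extra $\sigma$ factor in time. (A minor slip in your write-up: the longest border $W'$ of $Wb$ is neither right-maximal nor on the $\SLT$ path; rather $W'=W''b$ with $W''$ right-maximal and on the path, and $\phi(W'),\gamma(W')$ are exactly the $F_b(|W''|),G_b(|W''|)$ already stored on $S_b$, so your recurrence still goes through.)
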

\begin{proof}
Recall that the iterator of Theorem \ref{thm:enumerator} returns, for every right-maximal substring $W$ of $T$, the set of all its right extensions in lexicographic order, but the set of all its left extensions \emph{in arbitrary order}. To implement Lemma \ref{lemma:linear2} in this case, we just need a temporary array $\mathtt{buffer}[1..\sigma]$ of $\sigma\log{n}$ bits that is initialized to all zeros at the beginning of the traversal. When the iterator visits substring $W$, we scan $\mathcal{B}^{r}(W)$ and we set $\mathtt{buffer}[a]=a|W$ for all $(a,a|W) \in \mathcal{B}^{r}(W)$. Then, for every left extension $a_i$ of $W$ provided by the iterator, we compute $bord(a_{i}W)$ by accessing $\mathtt{buffer}[a_i]$, and we proceed as in Lemma \ref{lemma:linear2}. Once we have finished processing substring $W$, we reset $\mathtt{buffer}$ to its previous state by setting $\mathtt{buffer}[a]=0$ for all $(a,a|W) \in \mathcal{B}^{r}(W)$. The claimed space complexity comes from Theorem \ref{thm:enumerator} and from our assumption that $\sigma \in o(\sqrt{n}/\log{n})$.
{$\Box$\vskip1ex}
\end{proof}

For statistical reasons only substrings of length $O(\log_{\sigma}{n})$ are candidates for being over- or under-represented \cite{apostolico2003monotony}, so the space complexity of Theorem \ref{theorem:surprisingStrings} is effectively $o(n)$, every surprising substring can be encoded in a constant number of machine words, and thus it can be printed in constant time\footnote{We assume the word RAM model of computation with words of size $\Omega(\log{n})$ bits, in which all standard operations including multiplication have unit cost.}. The technique described in Theorem \ref{theorem:surprisingStrings} can be used to apply Lemma \ref{lemma:linear1} to tries whose nodes do not store the list of their children in lexicographic order. Moreover, it is easy to adapt Theorem \ref{theorem:surprisingStrings} to \emph{output} all candidate over- and under-represented substrings whose statistical score matches a user-specified criterion, by keeping an additional stack of characters of size $\lambda\log{\sigma}$ and by exploiting the fact that the iterator in Theorem \ref{thm:enumerator} returns the length of every substring it visits.

Reducing the number of patterns displayed by a data mining algorithm is key for making it useful in practice. According to the monotonicity of the scores described in Section \ref{sec:intro}, we can limit the search for over-represented (respectively, under-represented) substrings to maximal repeats (respectively, to minimal rare words): this observation was already implicit in \cite{apostolico2003monotony}, and Theorem \ref{theorem:surprisingStrings} can be easily adapted to consider only such candidates. Moreover, in a practical implementation we can compute and store $\phi(W)$ and $\gamma(W)$ just for maximal repeats, since the longest border of a maximal repeat is itself a maximal repeat. We can also avoid storing numbers $F_b(|W|)$, $G_b(|W|)$ and $W|b$ for all $b \in \Sigma_{T}^{r}(W)$ on the stacks of Lemma \ref{lemma:linear2}, whenever $W$ is a right-maximal substring of $T$ that cannot be written as $aV$ for a character $a$ and a maximal repeat $V$. Within the working space budget of Theorem \ref{theorem:surprisingStrings}, but in time $O(nd+\mathtt{occ})$, we can also compute the border and the statistical scores of all $\mathtt{occ}$ minimal absent words of $T$, by adapting Lemma \ref{lemma:maw} to work on Theorem \ref{thm:enumerator}: such strings are the only strings that do not occur in $T$ which could be under-represented in $T$, however they were not reported in previous works \cite{apostolico2003monotony,apostolico2000efficient}. The ability to assign a statistical score to minimal absent words could be useful also in other contexts, for example in choosing which minimal absent words should be displayed to the user, since their total number is $\Theta(n\sigma)$ in the worst case.

\section{Implementation and experiments}

The algorithms described in this paper are practical: a prototype implementation that scores all maximal repeats and minimal rare words of a DNA string of length $14.8 \cdot 10^6$ uses on average just $12 \cdot 10^3$ bits for the stack (which can fit in the L1 cache of current processors), with a few sudden bursts that reach up to approximately $6 \cdot 10^5$ bits (which can fit in the L2 cache of current processors): see Figure \ref{fig:stats} (left). Since borders are short on average (constant for a random string), border information takes a negligible fraction of the stack, which is approximately equally divided between BWT intervals and quantities related to the variance. Since most borders are short, most recursive accesses of the algorithm are directed to the bottom of the stack (Figure \ref{fig:stats}, right), so the hardware can automatically move portions of this region to the L1 cache. Moreover, if the user knows that the length of the longest border of every substring of the string under analysis is upper-bounded by a constant $\beta$, the stack can be made even smaller by pushing border and variance information just for string of length at most $\beta$. 

Our prototype implementation compares favourably to the previous implementation of Theorem \ref{theorem:apostolico} described in \cite{apostolico2004verbumculus}: given the BWT represented as a wavelet tree, our implementation scores \emph{all} maximal repeats and minimal rare substrings in approximately 33 seconds on a 2.50 GHz Intel Xeon E5-2640, while the previous implementation takes approximately 57 seconds and a peak of 6 gigabytes to build a truncated suffix tree and to score only candidate substrings of length at most 12, and 2 (respectively, 4) minutes and a peak of 14 gigabytes to build a truncated suffix tree and to score only candidate substrings of length at most 24 (respectively, 36). Being essentially a tree traversal, our algorithm is intrinsically parallel: we are in the process of developing a shared-memory, multithreaded implementation, and of testing the speedup induced by using more than one core.

\begin{figure}[t]
\centering
\includegraphics[scale=0.3]{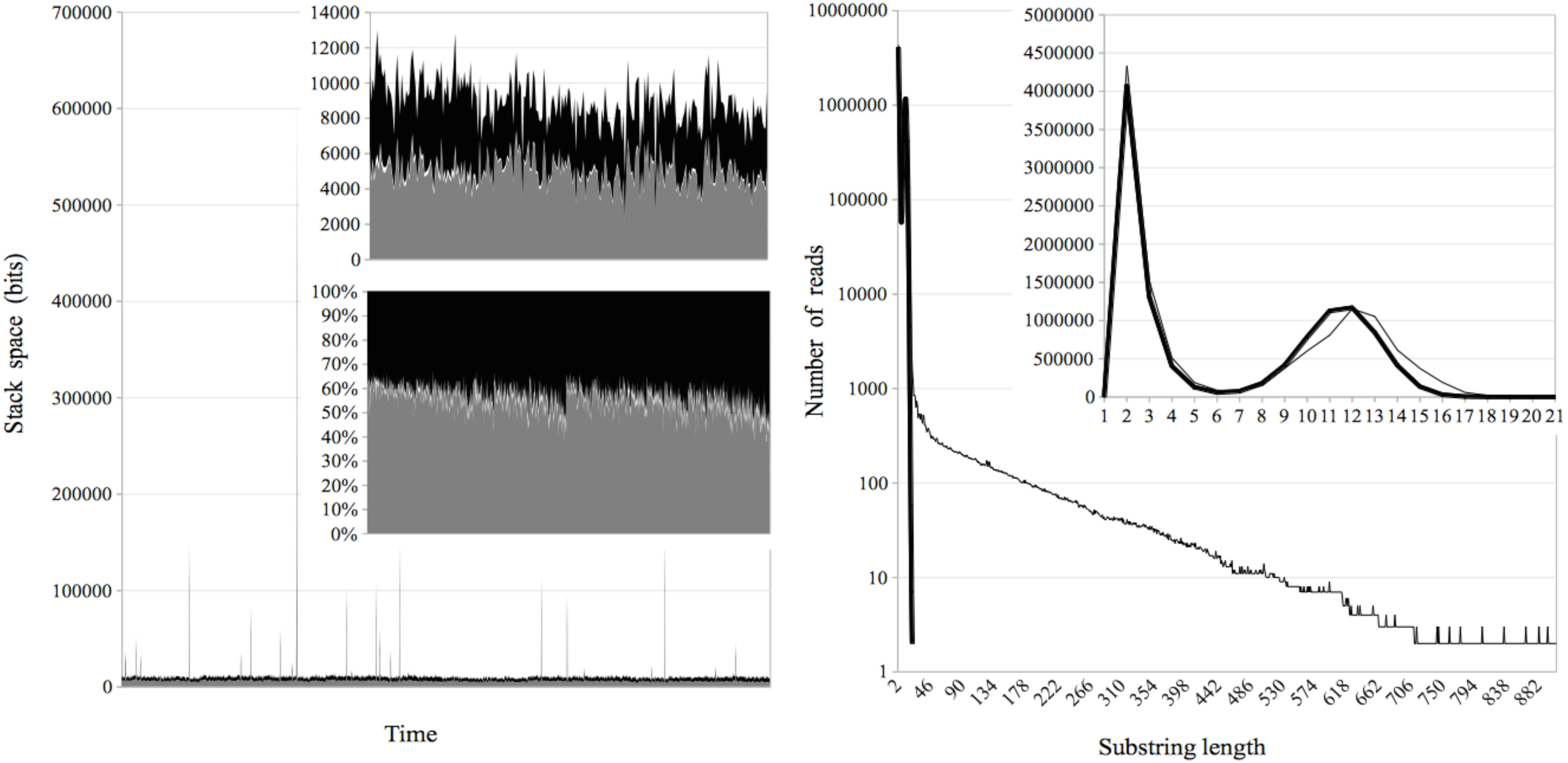}
\caption{A prototype implementation that scores all maximal repeats and minimal rare words of the genome of \emph{Sorangium cellulosum} ($\mathtt{NC\_021658}$ in NCBI). Left panel: number of bits taken by the stack in an entire run, sampled every two second. Top insert: detail of a typical subinterval without spikes. Bottom insert: percent composition of the stack in an entire run. Light gray: information to traverse the suffix-link tree using the BWT. White: border information. Black: variance information. Right panel: number of recursive read requests per string length. Thin line: $\mathtt{NC\_021658}$, thick line: a random reshuffle of $\mathtt{NC\_021658}$. The insert details the interval of lengths $[1..21]$. The peak around length 12 is caused by the retrievals of $\delta(W)$ generated by substrings of expected length.}
\label{fig:stats}
\end{figure}

\section{Acknowledgements}

We thank Alberto Apostolico for illuminating the details of the algorithms in \cite{apostolico2000efficient}, and Stefano Lonardi for providing the source code of the implementation described in \cite{apostolico2004verbumculus}.

\bibliographystyle{plain}
\bibliography{surprisingStrings}

\begin{thebibliography}{10}

\bibitem{apostolico2003monotony}
Alberto Apostolico, Mary~Ellen Bock, and Stefano Lonardi.
\newblock Monotony of surprise and large-scale quest for unusual words.
\newblock {\em Journal of Computational Biology}, 10(3-4):283--311, 2003.

\bibitem{apostolico2000efficient}
Alberto Apostolico, Mary~Ellen Bock, Stefano Lonardi, and Xuyan Xu.
\newblock Efficient detection of unusual words.
\newblock {\em Journal of Computational Biology}, 7(1-2):71--94, 2000.

\bibitem{apostolico1997annotated}
Alberto Apostolico, Mary~Ellen Bock, and Xuyan Xu.
\newblock Annotated statistical indices for sequence analysis.
\newblock In {\em Proceedgins of Compression and Complexity of Sequences 1997},
  pages 215--229. IEEE, 1998.

\bibitem{apostolico2004verbumculus}
Alberto Apostolico, Fang-Cheng Gong, and Stefano Lonardi.
\newblock Verbumculus and the discovery of unusual words.
\newblock {\em Journal of Computer Science and Technology}, 19(1):22--41, 2004.

\bibitem{Belazzougui14}
Djamal Belazzougui.
\newblock Linear time construction of compressed text indices in compact space.
\newblock In {\em Proceedings of the 46th Annual ACM Symposium on Theory of
  Computing}, STOC '14, pages 148--193, New York, NY, USA, 2014. ACM.

\bibitem{belazzougui2015framework}
Djamal Belazzougui and Fabio Cunial.
\newblock A framework for space-efficient string kernels.
\newblock In {\em Combinatorial Pattern Matching - 26th Annual Symposium, {CPM}
  2015, Ischia Island, Italy, June 29 - July 1, 2015, Proceedings}, pages
  13--25. Springer International Publishing, 2015.

\bibitem{BNV13}
Djamal Belazzougui, Gonzalo Navarro, and Daniel Valenzuela.
\newblock Improved compressed indexes for full-text document retrieval.
\newblock {\em Journal of Discrete Algorithms}, 18:3--13, January 2013.

\bibitem{chairungsee2012using}
Supaporn Chairungsee and Maxime Crochemore.
\newblock Using minimal absent words to build phylogeny.
\newblock {\em Theoretical Computer Science}, 450:109--116, 2012.

\bibitem{crochemore1998automata}
Maxime Crochemore, Filippo Mignosi, and Antonio Restivo.
\newblock Automata and forbidden words.
\newblock {\em Information Processing Letters}, 67(3):111--117, 1998.

\bibitem{Jewels_of_stringology}
Maxime Crochemore and Wojciech Rytter.
\newblock {\em Jewels of stringology}.
\newblock World Scientific, 2002.

\bibitem{gog2011compressed}
Simon Gog.
\newblock {\em Compressed suffix trees: design, construction, and
  applications}.
\newblock PhD thesis, University of Ulm, Germany, 2011.

\bibitem{herold2008efficient}
Julia Herold, Stefan Kurtz, and Robert Giegerich.
\newblock Efficient computation of absent words in genomic sequences.
\newblock {\em BMC Bioinformatics}, 9(1):167, 2008.

\bibitem{hoare1962quicksort}
Charles~AR Hoare.
\newblock {Q}uicksort.
\newblock {\em The Computer Journal}, 5(1):10--16, 1962.

\bibitem{ileri2015shortest}
Atalay~Mert Ileri, M.~Oguzhan K{\"{u}}lekci, and Bojian Xu.
\newblock A simple yet time-optimal and linear-space algorithm for shortest
  unique substring queries.
\newblock {\em Theoretical Computer Science}, 562:621--633, 2015.

\bibitem{keogh2002finding}
Eamonn Keogh, Stefano Lonardi, and Bill 'Yuan-chi' Chiu.
\newblock Finding surprising patterns in a time series database in linear time
  and space.
\newblock In {\em Proceedings of the Eighth ACM SIGKDD International Conference
  on Knowledge Discovery and Data Mining}, KDD '02, pages 550--556, New York,
  NY, USA, 2002. ACM.

\bibitem{lin2007experiencing}
Jessica Lin, Eamonn Keogh, Li~Wei, and Stefano Lonardi.
\newblock Experiencing {SAX}: a novel symbolic representation of time series.
\newblock {\em Data Mining and Knowledge Discovery}, 15(2):107--144, 2007.

\bibitem{morris1970linear}
James~H. Morris and Vaughan~R. Pratt.
\newblock A linear pattern-matching algorithm.
\newblock Technical Report~40, University of California, Berkeley, 1970.

\bibitem{String_matching_algorithms_and_automata}
Imre Simon.
\newblock String matching algorithms and automata.
\newblock In {\em First South American Workshop on String Processing, Belo
  Horizonte, Brazil}, pages 151--157, 1993.

\end{thebibliography}

\end{document}